\documentclass[12pt,onecolumn]{IEEEtran}

\IEEEoverridecommandlockouts
\usepackage{amsmath,amssymb,amsthm,mathrsfs}
\usepackage{graphicx}
\usepackage{cite}
\usepackage{flushend}
\usepackage{url}

\allowdisplaybreaks

\makeatletter%
\if@twocolumn%
\newcommand{\Figwidth}{\columnwidth}%
\newcommand{\Tablewidth}{0.42\textwidth}%
\def\twocolbreak{\nonumber\\ &}%
\else
\newcommand{\Figwidth}{4.0in}%
\newcommand{\Tablewidth}{0.6\textwidth}%
\def\twocolbreak{}%
\fi%
\makeatother%

\begin{document}

\title{On the Separability of Ergodic Fading MIMO Channels: A Lattice Coding Approach}
\author{Ahmed Hindy and Aria Nosratinia
\thanks{The authors are with the department of Electrical Engineering, University of Texas at Dallas, Email: ahmed.hindy@utdallas.edu and aria@utdallas.edu}
\thanks{This work was supported in part by the grant ECCS1546969 from the National Science Foundation.}
}

\maketitle



\newtheorem{theorem}{Theorem}
\newtheorem{lemma}{Lemma}
\newtheorem{remark}{Remark}
\newtheorem{corollary}{Corollary}
\newtheorem{definition}{Definition}

\def\Px{\rho}
\def\Ph{\sigma_h^2}
\def\Ix{\boldsymbol{I}}
\def\hv{\boldsymbol{h}}
\def\xv{\boldsymbol{x}}
\def\yv{\boldsymbol{y}}
\def\wv{\boldsymbol{w}}
\def\ev{\boldsymbol{e}}
\def\tv{\boldsymbol{t}}
\def\dv{\boldsymbol{d}}
\def\zv{\boldsymbol{z}}
\def\vv{\boldsymbol{v}}
\def\sv{\boldsymbol{s}}
\def\gv{\boldsymbol{g}}
\def\qv{\boldsymbol{q}}
\def\latpoint{\boldsymbol{\lambda}}
\def\SNRm{\boldsymbol{\Psi}}
\def\Hm{\boldsymbol{H}}
\def\Hall{\boldsymbol{H}_d}
\def\Am{\boldsymbol{A}}
\def\Bm{\boldsymbol{B}}
\def\Um{\boldsymbol{U}}
\def\Lm{\boldsymbol{L}}
\def\Fm{\boldsymbol{F}}
\def\Gm{\boldsymbol{G}}
\def\Vm{\boldsymbol{V}}
\def\Dm{\boldsymbol{D}}
\def\Wm{\boldsymbol{W}}
\def\comp{\mathbb{C}}
\def\real{\mathbb{R}}
\def\ints{\mathbb{Z}}
\def\lat{\Lambda}
\def\voronoi{\mathcal{V}}
\def\genvoronoi{\Omega}
\def\Ex{\mathbb{E}}
\def\vol{\text{Vol}}
\def\gap{\Delta}
\def\ball{\mathcal{B}}
\def\cov{\Sigma}
\def\covx{\boldsymbol{K_x}}
\def\Cs{\mathcal{L}}
\def\order{\boldsymbol{\pi}}
\def\impulse{\boldsymbol{\delta}}
\def\ss{\mathcal{H}}
\def\hh{\mathfrak{h}}
\def\prob{\mathbb{P}}
\def\bigO{\mathcal{O}}


\begin{abstract}
This paper addresses point-to-point communication over block-fading channels with independent fading blocks. When both channel state information at the transmitter (CSIT) and receiver (CSIR) are available, most achievable schemes use  separable coding, i.e., coding independently and in parallel over different fading states. Unfortunately, separable coding has drawbacks including large memory requirements at both communication ends. In this paper a lattice coding and decoding scheme is proposed that achieves the ergodic capacity without separable coding, with lattice codebooks and decoding decision regions that are universal across channel realizations. We first demonstrate this result for fading distributions with discrete, finite support whose sequences are {\em robustly typical}. Results are then extended to continuous fading distributions, as well as multiple-input multiple-output (MIMO) systems. In addition, a variant of the proposed scheme is presented for the MIMO ergodic fading channel with CSIR only, where we prove the existence of a universal codebook that achieves rates within a constant gap to capacity for finite-support fading distributions. The gap is small compared with other schemes in the literature. Extension to continuous-valued fading is also provided.
\end{abstract}

\begin{IEEEkeywords}
Ergodic capacity, lattice codes, separable coding, MIMO.
\end{IEEEkeywords}


\section{Introduction}
\label{intro}

For the band-limited Additive White Gaussian Noise (AWGN) channel, approaching capacity with manageable complexity has been extensively studied~\cite{TCM,MLC,coset,LDPC,Turbo,BICM,polar,Forney_Gaussian}. McEliece and Stark~\cite{stark_paper} established the ergodic capacity of the Gaussian fading channel with CSIR only. Goldsmith and Varaiya~\cite{ptp_ergodic} extended the result for full CSI (both CSIT and CSIR).  
The capacity of the ergodic fading MIMO channel with isotropic fading and CSIR was established by Telatar~\cite{telatar} and Foschini and Gans~\cite{foschini}. For a survey of related results please see Biglieri \textit{et al.}~\cite{fading_Shamai}.

Under fading and in the presence of CSIT, one straightforward capacity approaching technique is separable coding, i.e., coding independently and in parallel over different fading states of the channel~\cite{ptp_ergodic,capacity_MIMO}. Unfortunately, in practice separable coding imposes heavy costs that are further magnified in the presence of low-probability fading states. In particular either rate loss due to discarding low probability fading states, or loss of coding performance due to shorter block lengths, must be tolerated. In addition, separable coding requires operating multiple encoders and decoders with different transmission rates in parallel, which requires large memory at both communication ends.
Thereby, achieving the ergodic capacity of block-fading channels {\em without} separable coding remains an important and interesting question.%
\footnote{It was pointed out in~\cite{fading_Shamai} that under maximum likelihood decoding the ergodic capacity of point-to-point channels with CSIT can be attained using Gaussian signaling {\em without} separable coding. However, one cannot directly conclude that the same result holds for non-Gaussian (structured) codebooks.} 

This paper shows that non-separable lattice coding and decoding achieve the ergodic capacity of the block fading SISO channel. At the transmitter, the symbols of the codeword are permuted across time. Time-varying Minimum Mean-Square Error (MMSE) scaling is used at the receiver, followed by a decoder that is universal for all fading realizations drawn from a given fading distribution. Thus, the codebook and decision regions are fixed across transmissions; the only channel-dependent blocks are the permutation and the MMSE scaling. 
We first highlight the main ideas of the proposed scheme in the context of a heuristic channel model that motivates the proposed approach. We then generalize the solution to all fading distributions whose realizations are robustly typical, and to continuous distributions via a bounding argument.
The results are then extended to MIMO block-fading channels.

 A lattice coding and decoding scheme is also proposed for the ergodic fading MIMO channel with CSIR only, where the channel coefficients are drawn from a discrete distribution with finite support.  In this setting, channel-matching decision regions are proposed, where we use a worst case error bounding technique to show the existence of a universal lattice codebook that achieves rates within a constant gap to capacity for all fading realizations. The gap is infinitesimal in some special cases. We also extend the scheme to continuous-valued fading, and show that the rates achieved are close to capacity under Rayleigh fading.
%

Lattice coding has an extensive literature. De~Buda addressed the optimality of lattice codes for the AWGN channel~\cite{debuda_capacity}, a result later corrected by Linder \textit{et al.}~\cite{Linder}. Loeliger~\cite{Loeliger} proved the achievability of $\frac{1}{2} \log(\text{SNR})$ with lattice coding and decoding. Urbanke and Rimoldi~\cite{Urbanke_lattices} showed the achievability of $\frac{1}{2} \log(1+\text{SNR})$ with maximum likelihood decoding. Erez and Zamir~\cite{Erez_Zamir} showed that lattice coding and decoding achieve the capacity of the AWGN  channel, where the ingredients of the achievable scheme  include nested lattice codes in addition to common randomness via a dither variable and MMSE scaling at the receiver.  Erez \textit{et al.}~\cite{Lattices_good} also proved the existence of lattices with good properties that achieve the performance promised in~\cite{Erez_Zamir}. 
El~Gamal \textit{et al.}~\cite{DMT_lattices} showed that nested lattice codes achieve the white-input capacity, as well as the optimal diversity-multiplexing tradeoff, of the AWGN MIMO channel with fixed channel coefficients. Recently, Zhan \textit{et al.}~\cite{IF_RX} proposed a novel technique that is based on nested lattice codes together with \textit{integer-forcing linear receivers}, where the receiver decodes integer combinations of the signals at each antenna, similar to the~\textit{compute-and-forward} technique~\cite{CF_Nazer}.  
Ordentlich and Erez showed that in conjunction with a precoder that is independent of the channel, integer-forcing can operate within a constant gap to the MIMO capacity~\cite{IF_Erez}. 
In~\cite[Section~4.5]{dispersion} Vituri analyzed the performance of lattice codes under fading channels without power constraint.
Under ergodic fading and CSIR only, Luzzi and Vehkalahti~\cite{Luzzi_j} recently showed that a class of lattices belonging to a family of division algebra codes achieve rates within a constant gap to capacity, however, this gap can be large. In~\cite{Journal1}, a lattice coding scheme was proposed whose decoder does not depend on the fading realizations, achieving rates within a constant gap to capacity. The results in both~\cite{Luzzi_j,Journal1} are limited to channels with isotropic fading, i.e., the optimal input covariance matrix is a scaled identity. Lately, Liu and Ling~\cite{polar_lattices} showed that polar lattices achieve the capacity of the SISO i.i.d.\ fading channel. Campello \textit{et al.}~\cite{algebraic_lattices} also proved that lattices constructed from algebraic codes achieve the SISO ergodic capacity. Unfortunately neither~\cite{polar_lattices} nor~\cite{algebraic_lattices} are easily extendable to MIMO channels.

The remainder of the paper is organized as
follows. Section~\ref{sec:signal} establishes the notation and provides
an overview of lattices and typicality. Section~\ref{sec:CSIT} presents the
lattice coding scheme under full CSI, and Section~\ref{sec:CSIR} under
CSIR only.  Section~\ref{sec:conc} provides a concluding summary.


\section{Preliminaries}
\label{sec:signal}

\subsection{Notation and Definitions} 
\label{sec:notation}

Throughout the paper we use the following notation. Boldface lowercase letters denote column vectors and boldface uppercase letters denote matrices. The sets of real numbers and integers are denoted by $\real,\ints$, respectively. $\boldsymbol{A}^T$ denotes the transpose of matrix~$\boldsymbol{A}$. $a_i$ is element~$i$ of~$\boldsymbol{a}$. 
$\det(\Am)$ and $\text{tr}(\Am)$ denote the determinant and trace of the square matrix~$\Am$, respectively. $\Ix_n$ is the size-$n$ identity matrix. $\boldsymbol{0_n}$ and $\boldsymbol{1_n}$ denote the all-zero and all-one $n \times n$ matrices, respectively. $\prob,\Ex$ denote probability and expectation, respectively, and~$\prob_e$ represents error probability. $\ball_n(q)$ is an $n$-dimensional ball of radius~$q$ and the volume of shape $\mathcal{A}$ is $\vol(\mathcal{A})$. $\kappa^+ \triangleq \max \{ \kappa,0 \}$. $|\mathcal{A}|$ denotes the number of elements in set~$\mathcal{A}$. Unless otherwise specified, all logarithms are in base~$2$.


\subsection{Lattice Codes} 
\label{sec:lattice}

A lattice $\lat$ is a discrete subgroup of $\real^n$ which is closed under reflection and real addition. 
The fundamental Voronoi region~$\voronoi$ of the lattice~$\lat$ is defined by
\begin{equation}
\voronoi = \big \{ \boldsymbol{s} \in \real^n: \text{arg} \min_{\latpoint \in \lat} || \boldsymbol{s} - \latpoint ||  = \boldsymbol{0} \big \}. 
\label{fund_voronoi}
\end{equation}
The {\em second moment per dimension} of  $\lat$ is defined as 
\begin{equation}
\sigma_{\lat}^2 = \frac{1}{n \vol(\voronoi)} \int_{\voronoi} ||\boldsymbol{s}||^2 d \boldsymbol{s}.
\label{moment}
\end{equation}
Every $\boldsymbol{s} \in \real^n$ can be uniquely written as $\boldsymbol{s}= \latpoint + \ev$ where $\latpoint \in \lat$, $\ev \in \voronoi$, with ties broken in a systematic manner. The quantizer is then defined by
\begin{equation}
Q_{\voronoi}(\boldsymbol{s})= \latpoint \;, \quad \text{if } \boldsymbol{s} \in \latpoint+\voronoi. 
\label{quantizer}
\end{equation}
Define the modulo-$\lat$ operation corresponding to $\voronoi$ as follows
\begin{equation}
[\boldsymbol{s}] \, \text{mod} \lat \triangleq \boldsymbol{s}- Q_{\voronoi}(\boldsymbol{s}).
\label{mod}
\end{equation}
The modulo-$\lat$ operation also satisfies
\begin{equation}
\big[ \boldsymbol{s} + \boldsymbol{t} \big] \, \text{mod} \lat = 
\big[ \boldsymbol{s} + [\boldsymbol{t}] \, \text{mod} \lat \big] \, \text{mod} \lat
\hspace{5mm} \forall \boldsymbol{s},\boldsymbol{t} \in \real^n.
\label{mod_mod}
\end{equation}
The lattice $\lat$ is nested in~$\lat_1$ if~$\lat \subseteq \lat_1$. 
We employ the class of nested lattice codes proposed in~\cite{Erez_Zamir}. 
For completeness, the lattice construction is outlined as follows:
\begin{enumerate}
\item Draw an i.i.d. vector~$\gv \triangleq [g_1, \ldots, g_n]^T$ whose elements are uniformly distributed on the set $\{ 0, 1, \ldots, q-1 \}$, where $q$ is a large prime number.
\item Define the codebook $\mathcal{C} = \big \{ \zv \in \ints^n : \zv = \big[ \gv \beta \big] \, \text{mod} \, q , \quad \beta = 0,\ldots,q-1$ \big \}. 
\item Apply Construction~$A$ to lift $\mathcal{C}$ to $\real^n$ such that $\lat'_1 = q^{-1} \mathcal{C} + \ints^n$.
\end{enumerate}
A self-similar pair of nested lattices is used such that the coarse lattice $\lat = \eta \lat'_1$, where the scaling factor~$\eta$ assures $\lat$ has second moment~$\Px$, and $\lat_1 = \frac{1}{\tau} \lat'_1 = \frac{1}{\eta \, \tau} \lat$ where $\tau$ scales the fundamental volume of~$\lat_1$ to achieve rate~$R$ given by 
\begin{equation}
R \, \triangleq \, \frac{1}{n} \log \frac{\vol(\voronoi)}{\vol(\voronoi_1) } \,,
\label{lattice_rate}
\end{equation} 
and $\voronoi,\voronoi_1$ are the Voronoi regions of the coarse and fine lattices, respectively. 
The ensemble of nested lattice pairs employed above have been shown to be simultaneously good for AWGN coding, packing, covering and quantization~\cite{Lattices_good}.
The covering goodness of $\lat$ is defined by 
\begin{equation}
\lim_{n \to \infty} \frac{1}{n} \log \frac{\vol(\ball_n(R_c))}{\vol(\ball_n(R_f))}=0 \, ,
\label{covering}
\end{equation}
where the covering radius $R_c$ is the radius of the smallest sphere spanning $\voronoi$ and ${R_f}$ is the radius of the sphere whose volume is equal to $\vol(\voronoi)$. 
\begin{definition}\cite[Theorem~1]{DMT_lattices}
Let $\mathit{f}: \real^n \to \real$ be a Riemann integrable function of bounded support (i.e., $\mathit{f} (\zv) = 0$ if $|\zv|$ exceeds some bound). 
An ensemble of lattices~$\{ \lat \}$ with fundamental volume~$\vol(\voronoi)$ satisfies the Minkowski-Hlawka Theorem if for any $\epsilon>0$ there exists a lattice with dimension~$n$ such that
\begin{equation}
\bigg |  \Ex_{\lat} \Big [ \sum_{\zv \in \lat, \zv \neq \boldsymbol{0}} \mathit{f}(\zv) \Big] \, -  \, 
\frac{1}{\vol(\voronoi)}  \int_{\real^n}  \mathit{f} (\zv) d \zv  \bigg | \,  < \, \epsilon \, . 
\label{eq:MHS}
\end{equation}
\label{def:MHS}
\end{definition}
 
\begin{lemma}\cite[Theorem~2]{DMT_lattices}
The ensemble of nested lattice pairs of~\cite{Erez_Zamir} satisfies the Minkowski-Hlawka Theorem at large dimension~$n$. 
\label{lemma:MHS}
\end{lemma}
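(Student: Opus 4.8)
The plan is to establish this along the lines of \cite[Theorem~2]{DMT_lattices}, exploiting the fact that the nested lattice ensemble of \cite{Erez_Zamir} is a Construction-$A$ ensemble built from random linear codes over $\ints_q$; for such ensembles the Minkowski--Hlawka averaging identity can be verified by an elementary change of variables, which is the lattice analogue of Loeliger's averaging bound \cite{Loeliger}. Concretely, I would fix the target fundamental volume $\vol(\voronoi)=V$, write a generic ensemble member as $\lat=\gamma\,\lat'_1$ with $\lat'_1=q^{-1}\mathcal{C}+\ints^n$ and $\gamma$ chosen so that $\vol(\voronoi)=\gamma^n/q=V$ (this is possible because $\boldsymbol{0}\in\mathcal{C}$ forces $\ints^n\subseteq\lat'_1$, with index exactly $q$ for all but a vanishing fraction of generator vectors $\gv$), and then evaluate $\Ex_{\lat}\big[\sum_{\zv\in\lat,\,\zv\neq\boldsymbol{0}}f(\zv)\big]$ by splitting $\lat\setminus\{\boldsymbol{0}\}$ into the scaled integer sublattice $\gamma(\ints^n\setminus\{\boldsymbol{0}\})$ and the $q-1$ nontrivial cosets $\gamma(q^{-1}\beta\gv+\ints^n)$, $\beta=1,\dots,q-1$.

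The first piece is immediate: since $f$ has bounded support and $\gamma=(qV)^{1/n}\to\infty$, every nonzero point of $\gamma\ints^n$ eventually leaves the support of $f$, so this contribution is identically zero for all large $q$ and requires no averaging. For the nontrivial cosets I would use that $q$ is prime, so every $\beta\in\{1,\dots,q-1\}$ is a unit in $\ints_q$ and $\beta\gv\bmod q$ is uniform on $\{0,\dots,q-1\}^n$ whenever $\gv$ is; taking the expectation over $\gv$ then turns the sum over each such coset into
\[
\Ex_{\gv}\Big[\textstyle\sum_{\boldsymbol{m}\in\ints^n}f\big(\gamma(q^{-1}(\beta\gv\bmod q)+\boldsymbol{m})\big)\Big]\;=\;\frac{1}{q^n}\sum_{\boldsymbol{y}\in(\gamma/q)\ints^n}f(\boldsymbol{y}),
\]
because $\{q^{-1}\boldsymbol{u}+\boldsymbol{m}:\boldsymbol{u}\in\{0,\dots,q-1\}^n,\ \boldsymbol{m}\in\ints^n\}=q^{-1}\ints^n$. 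The right-hand side equals $\gamma^{-n}=\tfrac{1}{qV}$ times a Riemann sum of $f$ over the grid $(\gamma/q)\ints^n$, whose spacing $\gamma/q=V^{1/n}q^{1/n-1}$ tends to $0$ as $q\to\infty$ for $n\ge 2$; since $f$ is Riemann integrable with bounded support, that Riemann sum converges to $\int_{\real^n}f$, so each nontrivial coset contributes $(1+o(1))\,\tfrac{1}{qV}\int_{\real^n}f$. Summing over the $q-1$ of them yields $\tfrac{q-1}{q}\cdot\tfrac{1}{V}\int_{\real^n}f$, and adding the vanishing integer-sublattice part gives $\Ex_{\lat}\big[\sum_{\zv\neq\boldsymbol{0}}f(\zv)\big]\to\vol(\voronoi)^{-1}\int_{\real^n}f$ as $n\to\infty$ with the prime $q=q(n)$ chosen to grow suitably, which is exactly \eqref{eq:MHS}.

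The only parts that I expect to be more than routine are the bookkeeping at the edges of the construction: (i) verifying that the $q$ cosets $q^{-1}\beta\gv+\ints^n$ are genuinely distinct, which fails only on an event of probability at most $q^{-n}$ (for example $\gv=\boldsymbol{0}$), so it can be absorbed into an upper/lower bound sandwich without affecting the limit; and (ii) tracking the Erez--Zamir scaling relation $\gamma^n=qV$, which is precisely what makes each of the $q-1$ nontrivial cosets contribute a $\tfrac{1}{q}$-fraction of the target integral rather than the whole of it. The convergence of the Riemann sums themselves is immediate because $f$ is a single, fixed Riemann integrable function of bounded support, so no uniformity issue arises. Once (i)--(ii) are discharged the statement follows: for any $\epsilon>0$ one takes $n$ (and hence $q$) large enough that the displayed average lies within $\epsilon$ of $\vol(\voronoi)^{-1}\int_{\real^n}f$.
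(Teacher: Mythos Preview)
The paper itself offers no proof of this lemma; it is simply stated with a citation to \cite[Theorem~2]{DMT_lattices} and invoked as a known fact. Your sketch is essentially the Loeliger Construction-$A$ averaging argument \cite{Loeliger} that underlies that cited result, and it is correct: the coset decomposition of $\lat\setminus\{\boldsymbol{0}\}$, the uniformity of $\beta\gv\bmod q$ for prime $q$ and $\beta\in\{1,\dots,q-1\}$, and the Riemann-sum identification of each nontrivial coset's contribution with $(qV)^{-1}\int f$ all go through exactly as you describe.

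One small point of care: the claims $\gamma=(qV)^{1/n}\to\infty$ and $\gamma/q=V^{1/n}q^{1/n-1}\to 0$ are valid for \emph{fixed} $n\ge 2$ with $q\to\infty$, which is the natural regime for this argument (and the one used in \cite{Loeliger,DMT_lattices}). Your closing phrase ``as $n\to\infty$ with the prime $q=q(n)$ chosen to grow suitably'' slightly conflates two limits---in the joint regime where $q$ is merely polynomial in $n$ (as in the Erez--Zamir construction the paper describes), $\gamma\to 1$ rather than $\infty$. But since the MHS statement in Definition~\ref{def:MHS} is really a per-dimension assertion (given $\epsilon$, $f$, and $n$, choose $q$ large enough), this does not affect the validity of your argument; it only means the order of quantifiers should be stated with a bit more precision.
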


A key ingredient of the lattice coding scheme proposed in~\cite{Erez_Zamir} is using common randomness (dither)~$\dv$, drawn uniformly over $\voronoi$, in conjunction with the lattice code. The following lemma from \cite{Erez_Zamir} is key to the development of the results in this paper. 

\begin{lemma}\cite[Lemma~1]{Erez_Zamir}
For any point $\tv \in \voronoi$ that is independent of a dither $\dv$ drawn uniformly over a lattice Voronoi region~$\voronoi$, the point $\xv \triangleq \big[\tv - \dv \big] \, \text{mod} \lat$ is uniformly distributed over $\voronoi$ and is also independent of~$\tv$.
\label{lemma:uniform}
\end{lemma}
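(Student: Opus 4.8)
The plan is to recognize this as an instance of the so-called \emph{Crypto Lemma}: the set $\voronoi$ equipped with the addition $(\boldsymbol{a},\boldsymbol{b}) \mapsto [\boldsymbol{a}+\boldsymbol{b}]\,\text{mod}\,\lat$ is, up to a measure-zero boundary, the compact abelian group $\real^n/\lat$, and the uniform distribution on $\voronoi$ is its (normalized) Haar measure, which is invariant both under translation by a fixed group element and under negation. Accordingly I would condition on $\tv = \boldsymbol{t}_0$ for an arbitrary fixed $\boldsymbol{t}_0 \in \voronoi$ and show that $\xv = [\boldsymbol{t}_0 - \dv]\,\text{mod}\,\lat$ is uniform over $\voronoi$; since $\dv$ is independent of $\tv$ and the resulting conditional law of $\xv$ turns out not to depend on $\boldsymbol{t}_0$, both the claimed independence of $\xv$ and $\tv$ and the uniformity of the marginal of $\xv$ follow at once from this single computation.

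To carry out the conditional step I would argue directly with Lebesgue measure rather than invoking abstract Haar measure. Fixing $\boldsymbol{t}_0$ and letting $\mathcal{A} \subseteq \voronoi$ be an arbitrary measurable set, conditional independence of $\dv$ from $\tv$ gives
\begin{equation}
\prob\big( \xv \in \mathcal{A} \,\big|\, \tv = \boldsymbol{t}_0 \big) \;=\; \frac{1}{\vol(\voronoi)}\, \vol\Big( \big\{ \dv \in \voronoi : [\boldsymbol{t}_0 - \dv]\,\text{mod}\,\lat \in \mathcal{A} \big\} \Big).
\label{eq:crypto_plan}
\end{equation}
The key geometric fact is that $\real^n$ is tiled by the translates $\{\latpoint + \voronoi : \latpoint \in \lat\}$ with pairwise intersections of measure zero; consequently the set appearing in \eqref{eq:crypto_plan} is, up to measure zero, obtained from $\mathcal{A}$ by the map $\boldsymbol{s} \mapsto \boldsymbol{t}_0 - \boldsymbol{s}$ followed by adding back, to each point, the unique lattice vector that returns it to $\voronoi$. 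Since translation by the fixed vector $\boldsymbol{t}_0$ and translation by lattice points all preserve Lebesgue measure, this image has volume equal to $\vol(\boldsymbol{t}_0 - \mathcal{A}) = \vol(\mathcal{A})$, so the right-hand side of \eqref{eq:crypto_plan} equals $\vol(\mathcal{A})/\vol(\voronoi)$, i.e.\ the uniform law on $\voronoi$, which manifestly does not involve $\boldsymbol{t}_0$.

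The step that needs care, and which I regard as the main obstacle, is making ``adding back the lattice vector'' rigorous, i.e.\ verifying that $\dv \mapsto [\boldsymbol{t}_0 - \dv]\,\text{mod}\,\lat$ is a measure-preserving bijection of $\voronoi$ onto itself. I would decompose it using \eqref{mod_mod} as $[\boldsymbol{t}_0 - \dv]\,\text{mod}\,\lat = \big[\boldsymbol{t}_0 + ([-\dv]\,\text{mod}\,\lat)\big]\,\text{mod}\,\lat$. The inner map $\dv \mapsto [-\dv]\,\text{mod}\,\lat$ is (a.e.) just $\dv \mapsto -\dv$, since $\voronoi$ is symmetric about the origin because $\lat$ is closed under reflection; this is a measure-preserving involution of $\voronoi$. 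For the outer map, partition $\voronoi$ into the finitely many pieces on each of which $Q_{\voronoi}(\boldsymbol{t}_0 + \cdot)$ from \eqref{mod} is a constant lattice vector; on each piece the map acts as a translation (by $\boldsymbol{t}_0$ minus that fixed lattice vector), hence preserves volume, and by the tiling property the images of these pieces partition $\voronoi$ up to measure zero, so the composite is an a.e.\ measure-preserving bijection. Throughout, the systematic tie-breaking on $\partial\voronoi$ affects only a set of Lebesgue measure zero and therefore does not alter any of the volume identities; the remaining manipulations are routine changes of variables.
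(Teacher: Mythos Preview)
Your proof is correct and is essentially the standard ``Crypto Lemma'' argument. Note, however, that the paper does not actually prove this statement: it is quoted verbatim as \cite[Lemma~1]{Erez_Zamir} and used as a black box, so there is no in-paper proof to compare against. Your route---conditioning on $\tv=\boldsymbol{t}_0$, reducing to showing that $\dv\mapsto[\boldsymbol{t}_0-\dv]\,\text{mod}\,\lat$ is an a.e.\ measure-preserving bijection of $\voronoi$, and establishing the latter via the symmetry $\voronoi=-\voronoi$ together with the tiling of $\real^n$ by $\lat$-translates of $\voronoi$---is exactly the argument given in the original reference (and, more generally, the Haar-measure/translation-invariance proof common in the lattice-coding literature). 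The only point worth flagging is cosmetic: you invoke the identity \eqref{mod_mod} with a minus sign inside, whereas the paper states it with a plus; since $[-\dv]\,\text{mod}\,\lat$ is well defined this is harmless, and your explicit use of $-\voronoi=\voronoi$ handles the reflection cleanly.
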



\subsection{Typicality} 
\label{sec:typicality}

We briefly review robust typicality~\cite[Appendix]{robust_typicality} and weak typicality~\cite[Chapter~3.1]{Cover}. Consider a probability distribution $\prob$ on the discrete domain ${\cal A}=\{  \alpha_1, \alpha_2, \ldots, \alpha_{\chi} \}$. 
\begin{definition}
A $\delta$-robustly typical set $T_{\delta}^{(R)}$ according to $\prob$  is the set of all sequences~$\xv \in {\cal A}^n$ that satisfy
\begin{equation}
|n_k - n \prob_k| \leq \delta n \prob_k \, ,
\label{robust_1}
\end{equation}
for all $k \in \{1,\ldots, \chi \}$, where $\prob_k$ stands for $\prob (\alpha_k)$ and $n_k$ the number of coordinates of $\xv$ that are equal to~$\alpha_k$. 
\label{def:typicality}
\end{definition}
Long random sequences drawn i.i.d. are with high probability robustly typical according to the underlying distribution, as indicated by the following result.
\begin{lemma} \cite[Lemma 17]{robust_typicality}
The probability of a sequence~$\xv$ of length~$n$ {\em not} being $\delta$-robustly typical is upper bounded by
\begin{align}
\prob (\xv \notin T_{\delta}^{(R)}) & \leq \sum_{k=1}^{\chi} \prob \big( |n_k - n \prob_k| > \delta n \prob_k \big)
\twocolbreak
\leq  2 \chi e^{-\delta^2 \mu n/3},
\label{robust_2}
\end{align}
where $\mu \triangleq \min \prob_k$ is the smallest non-zero probability in~$\prob$. 
\label{lemma:typicality}
\end{lemma}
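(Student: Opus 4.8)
The plan is to combine a union bound over symbols with a multiplicative Chernoff bound on each binomial tail. The first inequality in~\eqref{robust_2} is immediate from Definition~\ref{def:typicality}: the event $\{\xv \notin T_{\delta}^{(R)}\}$ is precisely the union, over $k \in \{1,\ldots,\chi\}$, of the events $\big\{ |n_k - n\prob_k| > \delta n \prob_k \big\}$, so the union bound yields the displayed sum. For the second inequality, fix a symbol index $k$ and note that, when $\xv$ is drawn i.i.d.\ from $\prob$, the count $n_k = \sum_{i=1}^{n} \mathbf{1}\{x_i = \alpha_k\}$ is a sum of $n$ independent Bernoulli$(\prob_k)$ random variables with mean $\Ex[n_k] = n\prob_k$.

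Next I would bound the two tails $\prob\big(n_k \geq (1+\delta) n\prob_k\big)$ and $\prob\big(n_k \leq (1-\delta) n\prob_k\big)$ by the Chernoff method: using the moment generating function factorization $\Ex[e^{s n_k}] = (1 - \prob_k + \prob_k e^{s})^{n} \leq \exp\!\big(n\prob_k(e^{s}-1)\big)$, applying Markov's inequality to $e^{s n_k}$ (respectively $e^{-s n_k}$), and optimizing over $s>0$, one obtains the standard estimates $\prob\big(n_k \geq (1+\delta)n\prob_k\big) \leq \exp\!\big(-\tfrac{\delta^2}{3}n\prob_k\big)$ and $\prob\big(n_k \leq (1-\delta)n\prob_k\big) \leq \exp\!\big(-\tfrac{\delta^2}{2}n\prob_k\big)$ for $0<\delta\leq 1$. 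Since $\tfrac{1}{3} \leq \tfrac{1}{2}$, both are at most $\exp\!\big(-\tfrac{\delta^2}{3}n\prob_k\big)$, so $\prob\big(|n_k - n\prob_k| > \delta n\prob_k\big) \leq 2\exp\!\big(-\tfrac{\delta^2}{3}n\prob_k\big)$. Finally, every symbol appearing in the sum has $\prob_k \geq \mu = \min_j \prob_j$, hence $\exp\!\big(-\tfrac{\delta^2}{3}n\prob_k\big) \leq \exp\!\big(-\tfrac{\delta^2}{3}n\mu\big)$; summing the at most $\chi$ terms produced by the union bound gives $\prob(\xv \notin T_{\delta}^{(R)}) \leq 2\chi\, e^{-\delta^2 \mu n / 3}$, as claimed.

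The \emph{main obstacle} is pinning down the explicit constant $\tfrac{1}{3}$ in the exponent of the upper-tail estimate, which amounts to the elementary inequality $(1+\delta)\ln(1+\delta) - \delta \geq \tfrac{\delta^2}{3}$ for $\delta \in (0,1]$ (equivalently $e^{\delta}/(1+\delta)^{1+\delta} \leq e^{-\delta^2/3}$); this can be checked by comparing Taylor expansions, e.g.\ via $(1+\delta)\ln(1+\delta)-\delta \geq \tfrac{\delta^2}{2} - \tfrac{\delta^3}{6} \geq \tfrac{\delta^2}{3}$ on $(0,1]$. Everything else — the union bound, the MGF factorization, and replacing $\prob_k$ by $\mu$ — is routine, and the lower-tail constant $\tfrac{1}{2}$ only needs to dominate $\tfrac{1}{3}$, so no sharpness is lost there.
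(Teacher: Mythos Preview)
Your argument is correct: the union bound over the $\chi$ symbol events, followed by the multiplicative Chernoff bound on each binomial count $n_k \sim \mathrm{Bin}(n,\prob_k)$ and the replacement $\prob_k \geq \mu$, is exactly the standard route to~\eqref{robust_2}, and your handling of the constant $\tfrac{1}{3}$ via $(1+\delta)\ln(1+\delta)-\delta \geq \tfrac{\delta^2}{2}-\tfrac{\delta^3}{6} \geq \tfrac{\delta^2}{3}$ on $(0,1]$ is clean. Note that the paper does not supply its own proof of this lemma but simply cites it from~\cite[Lemma~17]{robust_typicality}; your write-up reproduces precisely the argument one finds there.
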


Weak typicality~\cite{Cover} is defined here via entropy rates.
\begin{definition}
An $\epsilon$-weakly typical set $T_{\epsilon}^{(W)}$ with respect to a sequence of probability distributions
$\prob \big ( x_1,\ldots,x_n \big)$ is defined as the set of all vectors $\xv=[x_1,\ldots,x_n]$ that satisfy
\begin{equation}
2^{-n \big( \hbar(x) + \epsilon  \big)} \leq \prob (\xv) \leq 2^{-n \big( \hbar(x) - \epsilon  \big)} ,
\label{robust_w1}
\end{equation}
where~$\hbar(x) \triangleq \lim \limits_{n \to \infty} \frac{1}{n} \sum_{\xv} -\prob(\xv) \log \prob(\xv)$ is the entropy rate of the sequence of probability distributions, assuming it exists.\footnote{A prominent example is when the sequence of probability laws is stationary.}
\label{def:typicality_weak}
\end{definition}
The probability of an arbitrary sequence of length~$n$ being weakly typical is $\prob \big(\xv \in T_{\epsilon}^{(W)} \big) > 1 - \epsilon$. The cardinality of $T_{\epsilon}^{(W)}$  is bounded by
\begin{equation}
|T_{\epsilon}^{(W)}| \, \leq \, 2^{n \big( \hbar(x) + \epsilon  \big)} .
\label{robust_w2}
\end{equation}


\section{A Capacity Achieving Lattice Coding Scheme}
\label{sec:CSIT}

Consider a real-valued single-antenna point-to-point channel with block-fading and i.i.d.\ Gaussian noise. 
The received signal is given by $y_i = h_i x_i + w_i$. The transmission and reception of a codeword over $n$ channel uses is represented by
\begin{equation}
\yv=\Hm \xv+\wv,
\label{sig_Rx}
\end{equation}
where $\Hm$ is an $n \times n$ diagonal matrix whose diagonal entries~$h_i$ are drawn from a discrete distribution with finite-support~$\ss$. The channel coherence length is $b$ with $n=n' b$, where $b$ is fixed and $n'$ is proportional to~$n$. Therefore each codeword experiences $n'$ independent fading realizations. The covariance of the channel is 
\begin{equation}
\boldsymbol{\cov_h} = \Ph
\begin{bmatrix}
 \boldsymbol{1_b} \hspace{4mm} \boldsymbol{0_b} \hspace{4mm} \ldots \hspace{4mm} \boldsymbol{0_b}  \\
\boldsymbol{0_b} \hspace{4mm} \boldsymbol{1_b} \hspace{4mm} \ldots \hspace{4mm} \boldsymbol{0_b}  \\
\ddots \\
\boldsymbol{0_b} \hspace{4mm} \boldsymbol{0_b} \hspace{4mm} \ldots \hspace{4mm} \boldsymbol{1_b}
\end{bmatrix}.
\label{C_H}
\end{equation}
 Both the transmitter and receiver have full knowledge of the channel state.
The noise~$\wv \in \real^n$ is zero-mean i.i.d. Gaussian with covariance~$\Ix_n$ and is independent of~$\Hm$. 
$\xv \in \real^n$ is the codeword, subject to an average power constraint~$\frac{1}{n} \Ex \big[ || \xv ||^2 \big]  \leq \Px$.

The ergodic capacity of the real-valued point-to-point channel is given by~\cite{ptp_ergodic}
\begin{equation}
C \, = \, \frac{1}{2} \Ex_h \big[ \log{\big( 1 +  h^2 \Px^*(h) \big)} \big],
\label{capacity_ptp}
\end{equation}
where $\Px^*(h)$ denotes the channel-dependent  waterfilling power allocation~\cite{ptp_ergodic}, which satisfies~$\Ex_h [ \Px^*(h) ]= \Px$.
 This capacity is achieved via separable coding~\cite{ptp_ergodic}, which is defined as follows
\begin{definition}
In a separable coding scheme, the ergodic fading channel over time is demultiplexed into virtual parallel channels according to fading states, over which independent codewords are transmitted. Each codeword is therefore transmitted over multiple occurrences of the same fading state.
\label{def:separability}
\end{definition}

To highlight the essential ideas of the proposed scheme we first address the problem in the context of a heuristic channel model. 


\subsection{The Random Location Channel}
\label{sec:heuristic}

We define a channel model, called the random location channel, where in each block of length~$n$, denoted $\hv \triangleq [h_1,\ldots, h_n]$, the empirical frequency of occurrence of each channel state perfectly matches the underlying probability distribution. 
Channel coefficients $h_i$ take values from the set $\ss \triangleq \big \{ \hh_1,\ldots,\hh_{|\ss|} \big \}$. 
Consider sequences $[h_1,\ldots,h_n]$ that satisfy $n_{\hh_k} = n \prob_{\hh_k}$. The random location channel draws from this set of sequences with equal probability. Thus, the channel is by construction perfectly robustly-typical. The transmitter knows non-causally the number of occurrences of each  $\hh_k$ in $\hv$, however, {\em their location is random,} and only known {\em causally} at both the transmitter and receiver (full CSI). The model provides a stepping stone for the achievable scheme proposed for the ergodic channel in Section~\ref{sec:general}, and serves to illustrate its underlying intuitions.

\begin{theorem}
For the random location channel defined above, the rate
\begin{equation}
R < \frac{1}{2} \sum_{s=1}^{|\ss|} \mu_s  \log{\big( 1 +  \hh_s^2 \Px^*(\hh_s) \big)} 
\end{equation}
is achievable using {\em non-separable} lattice coding, where~$\mu_s$ represents the frequency of occurrence of coefficient value~$\hh_s$ such that $\sum_{s=1}^{|\ss|} \mu_s = 1$, and~$\Px^*(\hh_s)$ is the waterfilling power allocation for channel coefficient~$\hh_s$ drawn from~$\ss$.
\label{theorem:rate_heuristic}
\end{theorem}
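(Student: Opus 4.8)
The plan is to adapt the Erez–Zamir nested-lattice machinery to this setting, where the key new ingredient is a channel-dependent permutation at the transmitter that "spreads" the codeword uniformly across all fading blocks, combined with per-symbol MMSE scaling at the receiver. First I would fix, for each $s$, a waterfilling power $\Px^*(\hh_s)$, and design the coarse lattice $\lat$ so that its second moment equals $\Px^*_{\max} \triangleq \max_s \Px^*(\hh_s)$ (or normalize so the average power constraint $\Ex[\|\xv\|^2]/n \le \Px$ holds after per-block power shaping). The transmitter, knowing the exact multiset of fading values $\{h_i\}$, partitions the $n$ coordinates into $|\ss|$ groups of sizes $n\mu_s = n\prob_{\hh_s}$ and applies a permutation $\order$ so that group $s$ lands on exactly those coordinates where $h_i = \hh_s$; within group $s$ the codeword symbols are scaled to power $\Px^*(\hh_s)$. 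Since the second moments are matched to a single lattice via dithering (Lemma~\ref{lemma:uniform}), the transmitted vector on each sub-block is uniform over a scaled Voronoi region and independent of the message.

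Second, at the receiver I would apply the standard MMSE coefficient on coordinate $i$, namely $\alpha_i = \hh_s \Px^*(\hh_s)/(1 + \hh_s^2 \Px^*(\hh_s))$ when $h_i = \hh_s$, undo the permutation, add back the dither, and reduce modulo $\lat$. The effective noise on coordinate $i$ is the familiar $\ev_i = (\alpha_i h_i - 1)x_i + \alpha_i w_i$, which — crucially — is independent of $x_i$ by the dithering/Crypto-lemma argument and has per-symbol variance $\Px^*(\hh_s)/(1 + \hh_s^2\Px^*(\hh_s))$ on group $s$. Summing over the deterministic group sizes, the \emph{total} effective noise variance per dimension is exactly $\frac{1}{n}\sum_i \text{Var}(\ev_i) = \sum_s \mu_s \,\Px^*(\hh_s)/(1+\hh_s^2\Px^*(\hh_s))$. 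The error event is $\ev \notin \voronoi_1$; using the Minkowski–Hlawka property (Lemma~\ref{lemma:MHS}) together with the covering/AWGN-goodness of the nested pair, the probability that $\ev$ escapes the fine Voronoi region vanishes as $n\to\infty$ provided the volume-to-noise ratio satisfies $\frac{1}{n}\log\vol(\voronoi_1) > \frac{1}{2}\log(2\pi e\, \sigma_{\text{eff}}^2) - o(1)$, where $\sigma_{\text{eff}}^2$ is the above noise variance.

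Third, I would convert this volume condition into a rate: since $\vol(\voronoi) = (2\pi e)^{n/2}\,2^{-o(n)}\prod_s (\Px^*(\hh_s))^{n\mu_s/2}$ for a quantization-good coarse lattice (second moment matched block-wise), the rate $R = \frac1n\log(\vol(\voronoi)/\vol(\voronoi_1))$ can approach $\frac1n\log\vol(\voronoi) - \frac12\log(2\pi e\,\sigma_{\text{eff}}^2)$. Plugging in and using $\prod_s (\Px^*(\hh_s))^{\mu_s} = 2^{\sum_s \mu_s \log \Px^*(\hh_s)}$ one finds
\begin{equation}
R \to \frac{1}{2}\sum_s \mu_s \log \Px^*(\hh_s) \;-\; \frac{1}{2}\log \sigma_{\text{eff}}^2,
\end{equation}
and the desired bound follows once one checks — by the arithmetic-geometric-type manipulation standard in parallel-AWGN lattice arguments — that this expression is at least $\frac12\sum_s \mu_s \log(1 + \hh_s^2\Px^*(\hh_s))$ whenever $\sigma_{\text{eff}}^2 \le \sum_s \mu_s \Px^*(\hh_s)/(1+\hh_s^2\Px^*(\hh_s))$, which holds with equality here. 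Choosing the coarse-lattice second moment per block so that the sub-block powers are exactly the waterfilling values closes the gap.

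**The main obstacle** will be handling the \emph{block structure together with the random location of states under causal-only knowledge of positions}: the permutation is chosen from knowledge of the multiset, but within a coherence block of length $b$ the receiver learns positions only causally, so I must argue that a single fixed permutation rule (agreed in advance, indexed by the realized multiset) suffices, and that the dither — which is drawn once, uniformly over $\voronoi$, independent of everything — still makes each permuted-and-scaled sub-block uniform and message-independent. A secondary technical point is that the effective noise $\ev$ is not i.i.d.\ (it is only independent across coordinates with group-dependent variances), so the decoding-error bound must be applied to this non-identically-distributed Gaussian-like noise; this is handled by noting that AWGN-goodness controls $\prob(\ev \notin \voronoi_1)$ through the \emph{average} of the per-coordinate variances, exactly as in the parallel-channel lattice arguments of \cite{DMT_lattices}, and the deterministic group sizes make that average a fixed constant rather than a random quantity. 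Finally, one must ensure the overall power constraint $\sum_s \mu_s \Px^*(\hh_s) = \Px$ is met, which is precisely the waterfilling normalization $\Ex_h[\Px^*(h)] = \Px$.
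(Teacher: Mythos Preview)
Your proof has a genuine gap at the rate-conversion step. You claim that
\[
\frac{1}{2}\sum_s \mu_s \log \Px^*(\hh_s) \;-\; \frac{1}{2}\log \sigma_{\text{eff}}^2
\;\;\ge\;\;
\frac{1}{2}\sum_s \mu_s \log\bigl(1+\hh_s^2\Px^*(\hh_s)\bigr)
\]
with $\sigma_{\text{eff}}^2=\sum_s \mu_s\,\Px^*(\hh_s)/(1+\hh_s^2\Px^*(\hh_s))$. Writing $\sigma_s^2 \triangleq \Px^*(\hh_s)/(1+\hh_s^2\Px^*(\hh_s))$, this reduces to $\sum_s \mu_s \log \sigma_s^2 \ge \log\bigl(\sum_s \mu_s \sigma_s^2\bigr)$, which is the \emph{reverse} of Jensen's inequality for the concave function $\log$. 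Equality holds only when all $\sigma_s^2$ coincide; otherwise your achievable rate is strictly \emph{below} the target. The root cause is your decision to decode via a spherical region (the fine Voronoi region $\voronoi_1$) and to summarize the non-identically-distributed effective noise by its \emph{arithmetic-mean} variance. Your appeal to \cite{DMT_lattices} does not rescue this: that reference does not control $\prob(\ev\notin\voronoi_1)$ through the average variance, but rather uses an \emph{ellipsoidal} decision region matched to the noise covariance, which is precisely what converts the product $\prod_s(1+\hh_s^2\Px^*(\hh_s))^{\mu_s}$ into the desired sum of logs. A related confusion appears in your volume formula $\vol(\voronoi)=(2\pi e)^{n/2}\prod_s(\Px^*(\hh_s))^{n\mu_s/2}$: a single coarse lattice has a single second moment and a single volume; the block-wise power scaling is applied \emph{after} the mod-$\lat$ operation and is undone at the receiver, so it does not reshape $\voronoi$.

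The paper's proof avoids this Jensen loss exactly here. It keeps a single coarse lattice of second moment $\Px$, applies the power scaling $\Dm$ and permutation $\Vm$ outside the lattice operations, and at the receiver uses an \emph{ellipsoidal} ambiguity decoder $\genvoronoi=\{\sv:\sv^T\boldsymbol{\cov}^{-1}\sv\le(1+\epsilon)n\}$ with $\cov_{ii}=\Px/(1+\Px^*_{\order(i)}h_{\order(i)}^2)$. The permutation reorders the channel gains so that $\boldsymbol{\cov}$---and hence the decision region---depends only on the fading \emph{distribution}, not on the realized locations, which is what delivers the non-separability claim. The error bound then comes from Loeliger's ambiguity decoder: $\prob_e<\prob(\zv\notin\genvoronoi)+(1+\delta)2^{nR}\vol(\genvoronoi)/\vol(\voronoi)$, and $\vol(\genvoronoi)$ carries the factor $\prod_i(1+\Px^*_ih_i^2)^{-1/2}$ directly, yielding $\frac{1}{2}\sum_s\mu_s\log(1+\hh_s^2\Px^*(\hh_s))$ without any AM--GM step. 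To repair your argument you must replace the spherical decoding by this ellipsoidal (or equivalently, noise-whitened Euclidean) decoder.
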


\begin{proof} 

\textit{Encoding:} Nested lattice codes are used where~$\lat \subseteq \lat_1$. The transmitter emits a lattice point~$\tv \in \lat_1$ that is dithered with~$\dv$ which is drawn uniformly over~$\voronoi$. The dithered codeword is as follows
\begin{equation}
\xv = \big[\tv - \dv \big]~\text{mod} \lat \,
= \, \tv - \dv + \latpoint,
\label{sig_tx}
\end{equation}
where $\latpoint= - Q_{\voronoi}(\tv - \dv ) \in \lat$ from~\eqref{mod}. The coarse lattice~$\lat \subset \real^{n}$ has a second moment~$\Px$. The codeword is then multiplied by two cascaded matrices as follows 
\begin{equation}
\xv' = \Dm \, \Vm \, \xv,
\label{sig_tx_2}
\end{equation}
where~$\Vm$ is a permutation matrix  and~$\Dm$ is a diagonal matrix with~$D_{ii} = \sqrt{\Px^*(h_i)/\Px}$, where~$\Px^*(h_i)$ is the optimal waterfilling power allocation for the fading coefficient~$|h_i|$, as given in~\cite{ptp_ergodic}. Hereafter we use~$\Px^*_i$ as a short-hand notation for~$\Px^*(h_i)$. 
We show in Appendix~\ref{appendix:power} that the average power constraint of~$\xv'$ is  approximately the same as~$\xv$. 

\textit{Decoding:} The received signal $\yv$~is multiplied by a matrix~$\Um~\in~\real^{n \times n}$ cascaded with an inverse permutation matrix~$\Vm^T$, and the dither is removed as follows
\begin{align}
\yv' =& \Vm^T \Um \yv + \dv  \nonumber \\
 =&  \xv + (\Vm^T \Um \Hm \Dm \Vm - \Ix_n) \xv + \Vm^T \Um \wv + \dv    \nonumber \\ 
 =&  \tv + \latpoint + (\Vm^T \Um \Hm \Dm \Vm - \Ix_n) \xv + \Vm^T \Um \wv,    \nonumber \\ 
 =&  \tv + \latpoint + \zv,
\label{sig_rx_2}
\end{align}
where
\begin{equation}
\zv \triangleq (\Vm^T \Um \Hm \Dm \Vm - \Ix_n) \xv  + \Vm^T \Um \wv,
\label{eq_noise}
\end{equation}
and~$\zv$ is independent of~$\tv$ from Lemma~\ref{lemma:uniform}. 

The receiver matrix~$\Um$ is chosen to be the MMSE matrix given by
\begin{equation}
\Um= \Px \Hm \Dm ( \Ix_{n} + \Px \Hm^2 \Dm^2 )^{-1}.
\label{eq:U_MSE}
\end{equation}
$\Um$ is diagonal, where $U_{ii}= \Px D_{ii} h_i/(1+ \Px D_{ii}^2 h_i^2)$.  Now, the diagonal elements of~$\Um$ are
\begin{equation}
U_{ii}= \frac {\sqrt{\Px \Px^*_i } h_i}{1+ \Px^*_i h_i^2}.
\label{eq:ui_MSE}
\end{equation}
With a slight abuse of notation, define the permutation function~$\order$ such that $(h_{\order(1)},h_{\order(2)},\ldots,h_{\order(n)})$ represent the channel coefficients arranged in ascending order of the magnitudes. Consider permutation matrix~$\Vm$ such that $\Hm_{\order} \triangleq \Vm^T \Hm \Vm$, where the diagonal entries of~$\Hm_{\order}$ are~$h_{\order(i)}$. See Appendix~\ref{appendix:V} for further details on~$\Vm$.
From~\eqref{eq_noise} and \eqref{eq:U_MSE}, $z_i$ are given by\,%
\footnote{Since waterfilling dedicates more power to channels with larger magnitude, $ h_i^2 \geq h_j^2$ implies $\Px^*_i h_i^2 \geq \Px^*_j h_j^2$~\cite{ptp_ergodic}.}
\begin{equation}
z_i= \frac{-1}{\Px^*_{\order(i)}  h_{\order(i)}^2 +1} x_i + \frac{\sqrt{\Px \Px^*_{\order(i)} } h_{\order(i)}}{\Px^*_{\order(i)} h_{\order(i)}^2 +1} w_{\order(i)}.
\label{eq:zi}
\end{equation}

The following lemma, whose proof can be found in Appendix~\ref{appendix:z_dist}, elaborates some geometric properties of~$\zv$.

\begin{lemma}
For any $\epsilon>0$ and $\gamma>0$, there exists $n_{\gamma,\epsilon}$ such that for all $n>n_{\gamma,\epsilon}$, %
\begin{equation}
   \prob \big( \zv \notin \genvoronoi \big) < \gamma,
\label{eq:error_event}
\end{equation}
where $\genvoronoi$ is an $n$-dimensional ellipsoid, given by
\begin{equation}
\genvoronoi \triangleq \{ \sv \in \real^n~:~ \sv^T \boldsymbol{\cov}^{-1} \sv \leq (1+\epsilon) n \},
\label{voronoi}
\end{equation}
and~$\boldsymbol{\cov}$ is a diagonal matrix whose diagonal elements are given by
\begin{equation}
\cov_{ii} =  \frac{\Px}{\Px^*_{\order(i)} h_{\order(i)}^2 + 1}
\label{z_cov}
\end{equation}
\label{lemma:z_dist}
\end{lemma}

Now, we apply a version of the ambiguity decoder proposed in~\cite{Loeliger}, defined by an ellipsoidal decision region $\genvoronoi$ in~\eqref{voronoi}.%
\footnote{$\genvoronoi$  is a bounded measurable region of~$\real^{n}$~\cite{Loeliger}.} 
 The decoder chooses $\hat{\tv} \in \lat_1$ if and only if the received point falls exclusively within the decision region of the lattice point~$\hat{\tv}$, i.e., $\yv' \in \hat{\tv} + \genvoronoi$.

\textit{Probability of error:} As shown in~\cite[Theorem~4]{Loeliger}, on averaging over the  ensemble of fine lattices~$\Cs$ of rate~$R$ whose construction follows Section~\ref{sec:lattice}, the probability of error can be bounded by
\begin{align}
\frac{1}{|\Cs|} \sum_{\Cs} \, \prob_e <& \prob (\zv \notin \genvoronoi) + (1+ \delta) \, \frac{\vol(\genvoronoi)}{\vol(\voronoi_1)} \nonumber \\
=& \prob (\zv \notin \genvoronoi) + (1+ \delta) 2^{nR} \, \frac{\vol(\genvoronoi)}{\vol(\voronoi)}, 
\label{error_prob}
\end{align}
for any $\delta > 0$, and the equality follows from~\eqref{lattice_rate}. This is a union bound involving two events: the event that~$\zv$ is outside the decision region, i.e., $\{\zv \notin \genvoronoi\}$ and the event that~$\zv$ is in the intersection of two decision regions $\big\{\yv' \in \{ \tv_1 + \genvoronoi \} \cap \{ \tv_2 + \genvoronoi \} \big\}$, where $\tv_1,~\tv_2 \in \lat_1$ are two distinct lattice points. 
From Lemma~\ref{lemma:z_dist}, the first term in~\eqref{error_prob} is bounded by~$\gamma$.  
Consequently, the error probability can be bounded by 
\begin{equation}
\frac{1}{|\Cs|} \sum_{\Cs} \, \prob_e < \gamma + (1+ \delta) 2^{nR} \frac{\vol(\genvoronoi)}{\vol(\voronoi)}, 
\label{error_prob_2}
\end{equation}
for any $\gamma,\delta>0$. The volume of~$\genvoronoi$ is given by
\begin{equation}
\vol(\genvoronoi)= (1+ \epsilon)^{\frac{n}{2}} 
\vol \big( \ball(\sqrt{n \Px}) \big)
\Big (\prod_{i=1}^{n} \frac{1}{\Px^*_i h_i^2 + 1} \Big ) ^{\frac{1}{2}}.
\label{vol_omega}
\end{equation}

The second term in~\eqref{error_prob_2} is then bounded by
\begin{align}
& \quad (1+ \delta) 2^{nR} (1+ \epsilon)^{n/2} \Big (\prod_{i=1}^{n} \frac{1}{\Px^*_i h_i^2 + 1} \Big ) ^{\frac{1}{2}} \frac{\vol(\ball(\sqrt{n \Px}))}{\vol(\voronoi)} \nonumber \\
&=(1+ \delta) 2^{ -n \Big ( - \frac{1}{n} \log \big( \frac{\vol(\ball(\sqrt{n \Px}))}{\vol(\voronoi)} \big) + \xi \Big ) }, \label{eq:exponential}
\end{align}
where 
\begin{align}
\xi \triangleq& \frac{-1}{2} \log({1+ \epsilon}) 
 - \frac{1}{2n} \log \Big (\prod_{i=1}^{n} \frac{1}{\Px^*_i h_i^2 + 1} \Big ) - R \nonumber \\
= & \frac{-1}{2} \log({1+ \epsilon})  + \frac{1}{2n} \sum_{i=1}^{n} \log ( 1 + \Px^*_i h_i^2 ) - R.  \nonumber \\
= & \frac{-1}{2} \log({1+ \epsilon})  + \frac{1}{2} \sum_{s=1}^{|\ss|} \mu_s \log ( 1 + \Px^*_s \hh_s^2 ) - R \, ,  
\label{xi}
\end{align}
and~\eqref{xi} follows from the structure of the random location channel. From~\eqref{covering}, since the lattice $\lat$ is good for covering, the first term of the exponent in~\eqref{eq:exponential} vanishes. 
From~\eqref{eq:exponential}, whenever $\xi$ is a positive constant we have 
 $\prob_e \to 0$ as $n \to \infty$. Hence, positive $\xi$ can be achieved as long as
\begin{equation}
R < \, \frac{1}{2} \sum_{s=1}^{|\ss|} \mu_s \log ( 1 + \Px^*_s \hh_s^2 )  - \frac{1}{2} \log({1+ \epsilon}) - \epsilon',
\end{equation} 
where~$\epsilon$, $\epsilon'$ diminish with~$n$. 
The existence of a fine lattice that achieves the probability of error averaged over the ensemble of lattices~$\Cs$ is straightforward. 
The outcome of the decoding process is the lattice point $\hat{\tv}$, where in the event of successful decoding the noise is eliminated and from~\eqref{sig_rx_2}, $\hat{\tv}= \tv + \latpoint$. On applying the modulo-$\lat$ operation on~$\hat{\tv}$,
\begin{equation}
[\hat{\tv}]\text{ mod}\lat \, = \, [\tv+\latpoint]\text{ mod}\lat \, = \, \tv,
\label{mod_lambda}
\end{equation}
where the second equality follows from~\eqref{mod_mod} since~$\latpoint \in \lat$. 
Following in the footsteps of~\cite{DMT_lattices}, it can be shown that the error probability of the {\em Euclidean lattice decoder} is upper bounded by the error probability of the ellipsoidal decision region in~\eqref{voronoi}. The Euclidean lattice decoder is given by
\begin{equation}
\hat{\tv} = \text{arg} \min_{\tv' \in \lat_1} || \cov^{\frac{-1}{2}} ( \yv' - \tv' ) ||^2,
\label{Euclid}
\end{equation}
followed by the modulo-$\lat$ operation in~\eqref{mod_lambda}. This concludes the proof of Theorem~\ref{theorem:rate_heuristic}. 
\end{proof}


\subsection{Ergodic Fading}
\label{sec:general}

Now, we are ready to address the ergodic fading channel whose channel coefficients are drawn from a discrete distribution with finite support. Unlike the random location channel discussed earlier, in the following the number of occurrences of~$\alpha_k$ within a block is no longer fixed. 
\begin{theorem}
Non-separable lattice coding achieves the ergodic capacity of block-fading channels whose channel coefficients are drawn from an arbitrary discrete distribution with finite-support, when channel state information is available at all nodes. 
\label{theorem:rate_ptp}
\end{theorem}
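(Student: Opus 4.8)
The plan is to re-run the proof of Theorem~\ref{theorem:rate_heuristic} essentially verbatim, replacing the frozen histogram $(n\prob_1,\ldots,n\prob_{|\ss|})$ of the random-location channel by the (now random) empirical counts $(n_1,\ldots,n_{|\ss|})$, and to absorb the resulting fluctuations using robust typicality. I keep the same architecture: nested lattice code with dither as in \eqref{sig_tx}, the magnitude-sorting permutation $\Vm$ and the waterfilling scaling $\Dm$ of \eqref{sig_tx_2} (built now from the \emph{ergodic} waterfilling function $\Px^*(\cdot)$ of \eqref{capacity_ptp}, which satisfies $\Ex_h[\Px^*(h)]=\Px$), the MMSE receiver $\Um$ of \eqref{eq:U_MSE}, dither removal yielding $\yv'=\tv+\latpoint+\zv$ with $\zv$ as in \eqref{eq_noise}--\eqref{eq:zi}, and a lattice decoder. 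First I partition the fading realizations into the $\delta$-robustly typical set $T_{\delta}^{(R)}$ of Definition~\ref{def:typicality} and its complement. On the complement I simply declare an error; by Lemma~\ref{lemma:typicality} this contributes at most $2|\ss|\,e^{-\delta^2\mu n/3}$ to the error probability, which vanishes as $n\to\infty$ for every fixed $\delta>0$.

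Next I condition on a typical realization $\hv\in T_{\delta}^{(R)}$. The key point is that, once the magnitudes are sorted by $\Vm$, a robustly typical sequence has an almost deterministic profile: $\Hm_{\order}$ is a staircase in which level $\hh_k$ occupies $n_k\in[(1-\delta)n\prob_k,(1+\delta)n\prob_k]$ coordinates, so the positions of the level transitions vary by only $\bigO(\delta n)$ across typical realizations. I therefore fix one \emph{universal} decision region $\genvoronoi^\star$, taken to be the ellipsoid of \eqref{voronoi} whose diagonal matrix $\cov^\star$ is the coordinatewise maximum of $\cov$ in \eqref{z_cov} over all typical $\hv$; since $\cov^\star\succeq\cov(\hv)$ entrywise, we have $\genvoronoi(\hv)\subseteq\genvoronoi^\star$ for every typical $\hv$. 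Lemma~\ref{lemma:z_dist}, applied conditionally on $\hv$, then gives $\prob(\zv\notin\genvoronoi(\hv))<\gamma$, hence $\prob(\zv\notin\genvoronoi^\star)<\gamma$, for $n$ large. The only other bookkeeping item is the power constraint: on $T_{\delta}^{(R)}$ one has $\frac1n\sum_i\Px^*(h_i)=\sum_k\frac{n_k}{n}\Px^*_{\hh_k}\le(1+\delta)\Px$, so a $(1+\delta)^{-1/2}$ rescaling of $\Dm$ restores $\frac1n\Ex\|\xv'\|^2\le\Px$ (cf.\ Appendix~\ref{appendix:power}) at the cost of a rate loss that is $\bigO(\delta)$.

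With $\genvoronoi^\star$ fixed, the union bound \eqref{error_prob} of~\cite[Theorem~4]{Loeliger} applies for each typical $\hv$: averaged over the fine-lattice ensemble of rate $R$, $\prob_e<\gamma+(1+\delta)\,2^{nR}\,\vol(\genvoronoi^\star)/\vol(\voronoi)$, and by covering goodness \eqref{covering} the exponent is controlled by $\xi$ as in \eqref{eq:exponential}--\eqref{xi}. Using $\frac{n_k}{n}\ge(1-\delta)\prob_k$ together with the fact that all the logarithms $\log(1+\Px^*_{\hh_k}\hh_k^2)$ are nonnegative and $\sum_k\prob_k\log(1+\Px^*_{\hh_k}\hh_k^2)=\Ex_h[\log(1+h^2\Px^*(h))]=2C$ by \eqref{capacity_ptp}, and noting that passing from $\cov$ to $\cov^\star$ changes only $\bigO(\delta n)$ coordinates each by a bounded factor, one gets $\xi\ge C-\tfrac12\log(1+\epsilon)-R-\bigO(\delta)$, which is a positive constant whenever $R<C$ and $\delta,\epsilon$ are small enough. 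Taking the expectation over $\hv$, the total ensemble-averaged error is at most $2|\ss|\,e^{-\delta^2\mu n/3}+\gamma+(1+\delta)2^{-n\xi}\to0$, so a single fixed fine lattice (a single codebook, together with the single decision region $\genvoronoi^\star$) achieves vanishing error at any $R<C$; applying $[\cdot]\bmod\lat$ as in \eqref{mod_lambda} recovers $\tv$, and as in Theorem~\ref{theorem:rate_heuristic} the Euclidean decoder $\hat\tv=\arg\min_{\tv'\in\lat_1}\|(\cov^\star)^{-1/2}(\yv'-\tv')\|^2$ does no worse. Only $\Vm$, $\Dm$, $\Um$ depend on the realization.

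The step I expect to be the main obstacle, and the one needing the most care in the write-up, is precisely the construction of the universal region $\genvoronoi^\star$: one must verify that the $\bigO(\delta n)$ coordinates on which the sorted profile of a typical $\hv$ is ``misaligned'' with the nominal profile simultaneously (i) do not spoil the containment $\genvoronoi(\hv)\subseteq\genvoronoi^\star$ — handled cleanly by the entrywise-max definition, which makes the containment exact — and (ii) inflate $\vol(\genvoronoi^\star)$, and hence degrade $\xi$, by only $\bigO(\delta)$, uniformly in $n$. Everything else is a routine re-run of the Theorem~\ref{theorem:rate_heuristic} computation with $n_k/n$ in place of $\prob_k$, plus the exponentially small atypicality term supplied by Lemma~\ref{lemma:typicality}.
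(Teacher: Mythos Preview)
Your high-level strategy---robust typicality, declare error on the atypical set, and on the typical set replace the realization-dependent ellipsoid by a fixed enlarged one whose volume inflation is $O(\delta)$---is exactly the paper's. The gap is in the permutation $\Vm$. You write ``once the magnitudes are sorted by $\Vm$'' and take $\Hm_\order$ to be an exact staircase with level-widths $n_k$; that is, you assume a full sort. But the paper's model reveals the channel \emph{causally} (the proof sketch states that perfect ordering ``is non-realizable due to the causality of the channel knowledge''), and with causal CSI and random counts $n_k$ no permutation computable at time $i$ from $h_1,\ldots,h_i$ can produce a perfect sort---the Appendix~\ref{appendix:V} construction worked only because the random-location channel fixes the counts in advance. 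Accordingly, most of Appendix~\ref{appendix:theorem_ptp} is spent on a \emph{best-effort online} $\Vm$ (Table~\ref{table_1}): assume nominal bucket sizes $n\prob_k$; place each arriving $h_i=\hh_k$ into bucket $k$ if there is room, else cascade to a lower (smaller-magnitude) bucket, else into an overflow region of $n_{out}=\delta n$ slots at the end. The enlarged region $\tilde{\genvoronoi}_1$ then uses the nominal profile on the first $n-n_{out}$ coordinates and the worst-case value $\Px$ on the last $n_{out}$, which guarantees $\genvoronoi^{(p)}\subseteq\tilde{\genvoronoi}_1$ \emph{after the imperfect permutation}, not after an idealized sort. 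Your entrywise-max $\cov^\star$ is clean and would be correct given a perfect sort, but it does not by itself supply a causal $\Vm$; that construction is the missing ingredient.

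A minor secondary difference: the paper takes $\delta=\delta' n^{-(1-\gamma)/2}\to 0$ with $n$, so that $n_{out}/n\to 0$ while the atypicality bound $2|\ss|e^{-\mu n^{\gamma}/3}$ still vanishes, avoiding a separate $\delta\to 0$ limit; your fixed-$\delta$ treatment works too but leaves that second limit implicit.
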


\begin{proof}
The proof appears in Appendix~\ref{appendix:theorem_ptp}; here we provide a sketch. We follow a {\em best effort} approach in designing the permutation matrix~$\Vm$. In order to account for the ordering errors, we use a fixed decision region $\tilde{\genvoronoi}_1$ that is slightly larger than  $\genvoronoi^{(p)}$ (the decision region resulting from perfect channel ordering, which is non-realizable due to the causality of the channel knowledge). However, when the channel is robustly typical, the total number of ordering errors is negligible at large~$n$, and hence  the rate loss incurred by using larger decision regions vanishes.  
\end{proof}

The extension of Theorem~\ref{theorem:rate_ptp} to complex-valued channels is straightforward, using techniques similar to~\cite[Theorem 2]{paper1}. The channel would then be ordered with respect to the magnitude of channel coefficients.


\subsection{Extension to Continuous-Valued Fading}
\label{sec:continuous1}

In order to extend the arguments to continuous-valued fading channels, we assume the fading distribution possesses a finite second moment. We note that with full CSI, the information density contributed by each transmission is a strictly increasing function of the absolute value of the fading coefficient. First, let~$\tilde{g} \triangleq |h|^2 \Px_h / \Px$ denote the squared channel gain times the normalized waterfilling power allocation for that channel gain. Thus, we can partition the continuous values~$\tilde{g}$ into $L+1$~sets $G_\ell \triangleq [g_{\ell-1},g_{\ell}]$, where  $g_0\triangleq 0$ and $g_{L+1}=\infty$. For any sequence of channel gains~$\tilde{g}$ drawn from a continuous distribution, we quantize~$\tilde{g}$ to the lower limit of the bracket $G_i$ to which it belongs, producing a discrete random variable~$g$ taking values over the set $\{g_0,g_1,\ldots,g_L\}$. Note that the independence of the continuous-valued fading realizations guarantees the independence of the discrete-valued counterparts, and hence robust typicality would still apply.
We show that the rate $R$ supported by the discrete-valued channel $g$ is within a gap to capacity that can be bounded as follows
\begin{align}
C-R = & \, \Ex [\log (1+ \Px \tilde{g}) ] \, - \,  \Ex [\log (1+ \Px g) ]   \nonumber \\
= & \, \,  \Ex [\log (\frac{1+ \Px \tilde{g}}{1+\Px g}) | \tilde{g} \leq g_L] \prob (\tilde{g} \leq g_L) 
\twocolbreak 
+ \Ex [\log (\frac{1+ \Px \tilde{g}}{1+\Px g_L}) | \tilde{g} > g_L] \prob (\tilde{g} > g_L)   \label{eqn_gap_quant} \\
< & \,  \max \Big\{\log(\frac{1+\Px g_{\ell}}{1+\Px g_{\ell-1}})  \Big\}_{\ell=1}^L 
\twocolbreak 
+ \Ex [\log (\frac{1+ \Px \tilde{g}}{1+\Px g_L}) | \tilde{g} > g_L] \prob (\tilde{g} > g_L)  \nonumber \\
< & \,  \max \Big\{\log \big(1 + \Px (g_{\ell}-g_{\ell-1}) \big)  \Big\}_{\ell=1}^L
\twocolbreak 
+  \Ex [\log (1 + \frac{\Px (\tilde{g}-g_L)}{1+\Px g_L}) | \tilde{g} > g_L] \prob (\tilde{g} > g_L)  \nonumber \\
< & \, \gamma_1 + \Ex [\log (1 + \frac{ \tilde{g}-g_L}{g_L}) | \tilde{g} > g_L] \prob (\tilde{g} > g_L)  \nonumber \\
= & \, \gamma_1 + \Ex [\log (\frac{\tilde{g}}{g_L}) | \tilde{g} > g_L] \prob (\tilde{g} > g_L)  \nonumber \\
< & \, \gamma_1 + c \, ( \frac{\Ex [\tilde{g} | \tilde{g} > g_L]}{g_L} - 1) \prob (\tilde{g} > g_L)  
\label{eqn_gap_log} \\
< & \, \gamma_1 + c \, ( \frac{\Ex [\tilde{g}]}{g_L \prob (\tilde{g} > g_L)} - 1) \prob (\tilde{g} > g_L)  \label{eqn_gap_expectation} \\
< & \, \gamma_1 + \frac{c  \, \Ex [\tilde{g}]}{g_L} \, \triangleq \, \gamma_1 + \gamma_2 \, ,
\label{eqn_gap_continuous}
\end{align}
where $c \triangleq \log e$, and $\gamma_1 \triangleq \max \Big\{\log \big(1 + \Px (g_{\ell}-g_{\ell-1}) \big)  \Big\}_{\ell=1}^L$. \eqref{eqn_gap_log} follows since $\log_e(x) < x-1$ for all $x>0$ and \eqref{eqn_gap_expectation} follows from the law of total expectation.  $\gamma_1$ vanishes when $\max \{ g_{\ell} - g_{\ell-1} \}_{i=1}^L \ll \frac{1}{\Px}$, while $\gamma_2$ vanishes when  $g_L \gg  \Ex [\tilde{g}]$. 
Note that a necessary condition for $\gamma_2$ to vanish is that~$\Ex [g]$ is finite.

The gap is bounded more tightly when the distribution of~$\tilde{g}$ has a vanishing tail. For instance, when~$\tilde{g}$ is exponential, 
\begin{align}
C-R < & \, \gamma_1 + c \, ( \frac{\Ex [\tilde{g} | \tilde{g} > g_L]}{g_L} - 1) \prob (\tilde{g} > g_L)  \nonumber \\
< & \, \gamma_1 + c \, ( \frac{\Ex [\tilde{g} + g_L]}{g_L} - 1) \prob (\tilde{g} > g_L)  \label{eqn_gap_memoryless} \\
< & \, \gamma_1 + \frac{c \, \Ex [\tilde{g}]}{g_L} \, e^{-\frac{g_L}{\Ex[\tilde{g}]}}, 
\label{eqn_gap_Rayleigh}
\end{align}
which vanishes exponentially with~$g_L$. \eqref{eqn_gap_memoryless} follows since~$\tilde{g}$ is exponentially distributed and hence memoryless. 

To summarize, the gap bounding argument can be described as follows: Given $L+1$ channel quantization bins, we bound the total rate loss due to quantization by the rate loss in each of the bins. The first $L$ terms bound the amount of loss in rate by the input-output information density at the highest versus the lowest channel gain in each bracket $G_1,\ldots,G_L$. This strategy will not work for the final bin because the channel gain in $G_{L+1}$ is unbounded, instead we use the total rate contributed by the bin $G_{L+1}$ as a bound. Fortunately, this term also vanishes at large~$g_L$ since the probability of occurrence of such fading values is small enough.


\subsection{Extension to MIMO}
\label{sec:MIMO}

The result in Theorem~\ref{theorem:rate_ptp} can be extended to an $M \times N$ MIMO channel with full CSI. The received signal at time~$i$ is given by 
\begin{equation}
\yv_i=\Hm_i \xv_i+\wv_i,
\label{sig_Rx_MIMO}
\end{equation}
where $\Hm_i \in \real^{N \times M}$ is the channel-coefficient matrix.
 
\begin{theorem}
Lattice codes achieve the ergodic capacity of the MIMO block fading channel with channel state information available at both transmitter and receiver. This result holds for both discrete-valued and continuous-valued channels.   
\label{theorem:rate_MIMO}
\end{theorem}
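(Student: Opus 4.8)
The plan is to reduce the MIMO problem to the scalar ergodic fading problem already solved in Theorem~\ref{theorem:rate_ptp} by diagonalizing each channel realization. First I would apply the singular value decomposition to each channel block, writing $\Hm_i = \boldsymbol{\Phi}_i \boldsymbol{\Gamma}_i \boldsymbol{\Theta}_i^T$ with $\boldsymbol{\Phi}_i,\boldsymbol{\Theta}_i$ orthogonal and $\boldsymbol{\Gamma}_i$ carrying the singular values $\gamma_{i,1},\ldots,\gamma_{i,r}$, $r=\min(M,N)$. Since full CSI is available, the transmitter right-multiplies its (dithered, permuted, power-scaled) lattice signal by $\boldsymbol{\Theta}_i$ and the receiver left-multiplies by $\boldsymbol{\Phi}_i^T$, so that $\boldsymbol{\Phi}_i^T\Hm_i\boldsymbol{\Theta}_i=\boldsymbol{\Gamma}_i$ and the MIMO channel becomes $r$ parallel scalar sub-channels per channel use --- i.e., a diagonal channel exactly of the form treated in Section~\ref{sec:CSIT}, with the diagonal entries of $\Hm$ replaced by the singular values. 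Because $\Hm_i$ is drawn from a discrete distribution with finite support, the vector of singular values also takes finitely many values, and the independence across blocks carries over to the sequence of singular-value vectors; hence, viewing each block's sorted singular-value vector as a single ``super-symbol,'' this sequence is robustly typical in the sense of Definition~\ref{def:typicality}.

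Next I would run the scheme of Theorem~\ref{theorem:rate_ptp} essentially verbatim on the $nr$ effective scalar sub-channels: a global waterfilling $\Px^*(\gamma)$ over all space-time sub-channels realized by the diagonal matrix $\Dm$ (with $\Px^*(\gamma)=0$ on excess eigenmodes when $M\neq N$ or the realization is rank-deficient), the best-effort permutation $\Vm$ that orders the sub-channels by the information density $\gamma^2\Px^*(\gamma)$, the diagonal MMSE scaling $\Um$ of~\eqref{eq:U_MSE}, dithering as in~\eqref{sig_tx}, and the universal ellipsoidal / Euclidean lattice decoder of~\eqref{voronoi} and~\eqref{Euclid}. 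The effective noise $\zv$ after MMSE scaling keeps the structure of~\eqref{eq:zi}, so its covariance is diagonal with entries $\Px/(\Px^*_j\gamma_j^2+1)$ and Lemma~\ref{lemma:z_dist} applies unchanged. Repeating the volume computation~\eqref{vol_omega}--\eqref{xi} together with the covering goodness of $\lat$ then shows that any rate below $\tfrac12\,\Ex\big[\sum_{j=1}^{r}\log(1+\gamma_j^2\Px^*(\gamma_j))\big]$ is achievable, and by the Telatar / Goldsmith--Varaiya waterfilling characterization~\cite{telatar,ptp_ergodic} this equals the ergodic MIMO capacity with full CSI. The continuous-valued case follows by quantizing the effective gains $\tilde g_j\triangleq\gamma_j^2\Px_h/\Px$ into $L+1$ bins and invoking the bounding argument of~\eqref{eqn_gap_quant}--\eqref{eqn_gap_continuous}; here $\Ex[\tilde g_j]$ is finite because $\sum_j\gamma_j^2=\|\Hm\|_F^2=\sum_{k,\ell}H_{k\ell}^2$ has finite mean whenever the channel entries have finite second moment, so $\gamma_2$ still vanishes for large $g_L$.

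I expect the main obstacle to be the bookkeeping around the permutation step. Unlike the SISO case the sub-channels live on a two-dimensional space-time grid, and the ``best effort'' ordering must be shown to produce only a negligible number of ordering errors when the super-symbol process is robustly typical, so that the slightly enlarged fixed decision region $\tilde{\genvoronoi}_1$ incurs vanishing rate loss; the argument from Appendix~\ref{appendix:theorem_ptp} should extend, but the combinatorics of counting cross-block ordering violations has to be redone for vector-valued super-symbols. Secondary technical points are the handling of $M\neq N$ and rank-deficient realizations via zero-padding of unused eigenmodes, a consistent total-power versus per-antenna normalization in Appendix~\ref{appendix:power}, and checking that the unitary precoder $\boldsymbol{\Theta}_i$ does not disturb the dither argument --- which it does not, since $\boldsymbol{\Theta}_i$ is applied after the modulo-$\lat$ operation in~\eqref{sig_tx} and is exactly inverted by $\boldsymbol{\Phi}_i^T$ and the subsequent filtering, so Lemma~\ref{lemma:uniform} still guarantees independence of $\zv$ and $\tv$.
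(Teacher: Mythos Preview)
Your proposal is correct and shares the paper's key reduction: apply the SVD $\Hm_i=\Bm_i\Lm_i\Fm_i^T$, precode by the right singular vectors, equalize by the left singular vectors, and thereby collapse the MIMO channel into $\mathcal{S}=\min(M,N)$ parallel scalar sub-channels with gains given by the singular values, then invoke the SISO machinery and space--time waterfilling. The continuous extension you sketch is also the paper's (omitted) route.

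Where you diverge is in how the resulting parallel channels are coded. The paper takes the shorter path: it runs $\mathcal{S}$ \emph{separate} SISO lattice codebooks, one per eigenmode stream, and for each stream simply cites Theorem~\ref{theorem:rate_ptp} with $h_i$ replaced by $\ell_i^{(\iota)}$ (which is again i.i.d.\ block-fading with finite support). This sidesteps precisely the ``main obstacle'' you anticipate: there is no two-dimensional space--time permutation, no vector super-symbol bookkeeping, and no need to re-derive Appendix~\ref{appendix:theorem_ptp} for cross-eigenmode ordering violations. Your single $nr$-dimensional lattice with a global best-effort permutation is more in the spirit of fully non-separable coding (even across eigenmodes), and it does go through once you verify that pooling all singular values still yields controllable deviation counts via robust typicality of the super-symbol sequence; but the paper is content with spatial separation and temporal non-separation, which keeps the proof to a few lines.
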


\begin{proof}
Since~$\Hm_i$ are known perfectly, the transmitter and receiver can transform the MIMO channel into $\mathcal{S} \triangleq \min \{ M, N \}$ SISO parallel channels via singular-value decomposition. The SISO individual capacities can be achieved as shown in Section~\ref{sec:general}. Let the singular-value decomposition of~$\Hm_i$ be $\Hm_i = \Bm_i \Lm_i \Fm_i^T $, where $\Bm_i \in \real^{N \times N} $,  $\Fm_i \in \real^{M \times M} $  are orthonormal matrices representing the left and right eigenvalue matrices of~$\Hm_i$, respectively. $\Lm_i$ is an $N \times M$ rectangular diagonal matrix with $\mathcal{S}$ non-zero values on the main diagonal. Hence, at the receiver, the received signal is spatially equalized as follows
\begin{equation}
\tilde{\yv}_i = \Bm_i^T \yv_i,
\label{spatial_decoding}
\end{equation}
and at the transmitter, the signal is spatially precoded such that
\begin{equation}
\xv_i = \Fm_i \tilde{\xv}_i.
\label{spatial_precoding}
\end{equation}
From~\eqref{sig_Rx_MIMO}--\eqref{spatial_precoding}, $\tilde{\yv}_i$ can be represented by
\begin{equation}
\tilde{\yv}_i = \Lm_i \tilde{\xv}_i + \tilde{\wv_i},
\label{sig_Rx_parallel}
\end{equation}
where $\tilde{\wv_i} \triangleq \Bm_i^T \wv_i \in \real^N$ is i.i.d. Gaussian, since $\Bm_i$ is orthonormal. Each element in $\tilde{\yv}_i \in \real^N$ is then 
\begin{equation}
\tilde{y}_i^{(\iota)} = \ell_i^{(\iota)} \tilde{x}_i^{(\iota)} + \tilde{w}_i^{(\iota)} , \qquad \iota = 1, \ldots, \mathcal{S}, 
\label{sig_Rx_SISO}
\end{equation}
where $\ell_i^{(1)},\ell_i^{(2)}, \ldots, \ell_i^{(\mathcal{S})}$ represent the singular values of~$\Hm_i$ in descending order. The received signal in~\eqref{sig_Rx_SISO} is equivalent to a set of~$\mathcal{S}$ parallel channels, whose individual capacities can be achieved similar to Section~\ref{sec:general} via transmitting~$\mathcal{S}$ simultaneous lattice codebooks across antennas. The final step would be allocating the optimal power policy, which is waterfilling over time and space, as follows~\cite[Section~8.2.3]{Tse}. Assuming that the joint probability distribution of $\ell^{(1)}, \ldots, \ell^{(\mathcal{S})}$ is known, the power of stream~$\iota$ at time~$i$ is given by
\begin{equation}
P_i^{(\iota)} =  \{ c - \frac{1}{(\ell_i^{(\iota)})^2} \}^+, 
\label{waterfilling_1}
\end{equation}
where $c$ is chosen such that
\begin{equation}
c \triangleq \sum_{\iota=1}^\mathcal{S} \Ex \big[ \{ c - \frac{1}{(\ell_i^{(\iota)})^2} \}^+  \big] = P,
\label{waterfilling_2}
\end{equation}
and~$P$ is the average power constraint. 
The extension to continuous-valued channels is similar to SISO and is omitted. This concludes the proof of Theorem~\ref{theorem:rate_MIMO}.
\end{proof}


\section{The MIMO Channel Without CSIT}
\label{sec:CSIR}

In this section we consider the $M \times N$ MIMO point-to-point channel with CSIR only. The received signal at time~$i$ is given by $\yv_i=\Hm_i \xv_i+\wv_i$, where $\Hm_i \in \real^{N \times M}$ is the channel-coefficient matrix at time~$i$. For convenience channels gains are taken to be real-valued; the extension to complex-valued channels is straight forward and similar to~\cite{paper1}.
The channel experiences block fading
with coherence length $b$, thus $\Hm_i$ are identically distributed, and any two of them are independent if and only if taken from different fading blocks. For convenience, we also define $\Hm$ to obey the same distribution, standing in for the prototypical MIMO channel gain matrix without reference to a specific time.  Each codeword consists of $n$ channel uses, where $n$ is an integer multiple of the fading block length, i.e., $n=n' b$. 
The transmitter knows the channel distribution, including the coherence length, but not the channel realizations. 
$\xv_i \in \real^M$ is the transmitted vector at time~$i$, where the codeword
\begin{equation}
\xv \triangleq [\xv_1^T \, \xv_2^T , \ldots \xv_n^T]^T
\label{codeword_MIMO}
\end{equation} 
is transmitted throughout~$n$ channel uses and satisfies $\Ex [ || \xv ||^2 ] \leq n \Px$. Unlike the achievable scheme in Section~\ref{sec:general}, each codeword is transmitted across both space and time. The noise~$\wv \in \real^{N n}$ defined by $\wv \triangleq [\wv_1^T , \ldots , \wv_n^T]^T$ is zero-mean i.i.d. Gaussian with covariance $ \Ix_{N n} $. For convenience we define the SNR per transmit antenna to be $\Px' \triangleq \Px/ M$.
The ergodic capacity of the MIMO channel is given by~\cite{Tse}
\begin{equation}
C =  \max_{\text{tr}(\covx) \leq \Px }  \frac{1}{2} \,  \Ex_H \big[ \log  \det ( \Ix_N +  \Hm \covx \Hm^T ) \big],
\label{capacity_MIMO_gen}
\end{equation}
where~$\covx$ is the covariance matrix of each super-symbol~$\xv_i$.
For a sequence of channel coefficients~$\{\Hm_i\}_{i=1}^n$ drawn from an underlying distribution, weak law of large numbers implies that for each positive $\eta$ and $\tilde{\epsilon}$, a finite $n$ exists such that
\begin{align}
\prob \bigg( \Big | & \frac{1}{2n} \sum_{i=1}^n\log \det \big( \Ix_M + \Px' \Hm_i^T \Hm_i \big) \,
\twocolbreak
 - \, \frac{1}{2} \Ex \big[ \log \det ( \Ix_M + \Px' \Hm^T \Hm) \big] \Big | \, \geq  \, \eta \bigg) < \tilde{\epsilon} \, , 
\label{eq:rateMean}
\end{align}
Hence, the expression $\frac{1}{2n} \sum_{i=1}^n\log \det \big( \Ix_M + \Px' \Hm_i^T \Hm_i \big)$ approaches its statistical mean with high probability as $n$~grows. Hereafter we denote the left-hand side probability in~\eqref{eq:rateMean} by~$\prob_{\text{out}}^{(\eta)}$.

\begin{lemma}
 Consider a MIMO channel~$ \yv = \Hm_s \xv + \wv $, where $\Hm_s \triangleq \text{diag} \big( \Hm_1, \ldots, \Hm_n  \big)$, and $\Hm_1, \ldots, \Hm_n$ are realizations of a stationary and ergodic process, and are only known at the receiver. Then there exists at least one lattice codebook that achieves rates satisfying
 \begin{equation}
 R < \frac{1}{2} \Ex \big[ \log \det ( \Ix_M + \Px' \Hm^T \Hm) \big] \, - \, \eta \, ,
 \label{capacity_eqn}
 \end{equation}
with an arbitrary error probability $\prob_e \leq \epsilon''$, such that both $\eta,\epsilon''$ diminish at large~$n$.
 \label{lemma:ptp_capacity}
 \end{lemma}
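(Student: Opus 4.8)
The plan is to combine the nested-lattice construction of Section~\ref{sec:lattice} with an MMSE-scaled ambiguity decoder whose decision region is matched to the block-diagonal channel $\Hm_s$, and then average the error probability over the lattice ensemble in the style of~\cite{Loeliger,DMT_lattices}. First I would set up the transmit side: let $\tv\in\lat_1$ be the message point, dither it with $\dv$ uniform over $\voronoi$ to form $\xv=[\tv-\dv]\,\text{mod}\,\lat$, so that by Lemma~\ref{lemma:uniform} the effective noise is independent of $\tv$ and $\xv$ is uniform over $\voronoi$ with second moment $\Px'$ per dimension (white input, covariance $\Px'\Ix_M$ per super-symbol). On the receive side I would scale by the block-diagonal MMSE matrix $\Um=\Px'\Hm_s^T(\Ix_{Nn}+\Px'\Hm_s\Hm_s^T)^{-1}$ and add back the dither, giving $\yv'=\Um\yv+\dv=\tv+\latpoint+\zv$ where $\zv\triangleq(\Um\Hm_s-\Ix_{Mn})\xv+\Um\wv$ is independent of $\tv$. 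The effective noise $\zv$ is block-diagonal in structure: on the $i$-th block its covariance is the MMSE error covariance $\cov_i\triangleq(\Ix_M+\Px'\Hm_i^T\Hm_i)^{-1}\Px'$ up to the standard MMSE identities, so $\zv$ has covariance $\cov_s\triangleq\text{diag}(\cov_1,\ldots,\cov_n)$.

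Next I would define the decision region as the channel-matched ellipsoid $\genvoronoi\triangleq\{\sv:\sv^T\cov_s^{-1}\sv\le(1+\epsilon)Mn\}$ and invoke a concentration argument: because $\xv$ is uniform over a good lattice Voronoi region and $\wv$ is Gaussian, $\zv^T\cov_s^{-1}\zv$ concentrates around $Mn$, so $\prob(\zv\notin\genvoronoi)\to 0$ — this is the analogue of Lemma~\ref{lemma:z_dist}, now conditioned on the realization $\{\Hm_i\}$, and uniformly small once $n$ is large (one may first condition on the high-probability event in~\eqref{eq:rateMean} and handle its complement, of probability $\prob_{\text{out}}^{(\eta)}$, separately). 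Then I would apply the Loeliger/Minkowski–Hlawka averaging of~\cite[Theorem~4]{Loeliger} (legitimate here by Lemma~\ref{lemma:MHS}) to bound the ensemble-average error probability by $\prob(\zv\notin\genvoronoi)+(1+\delta)2^{nR}\vol(\genvoronoi)/\vol(\voronoi)$. The volume ratio is $\vol(\genvoronoi)/\vol(\voronoi)=(1+\epsilon)^{Mn/2}\big(\prod_i\det\cov_i\big)^{1/2}\vol(\ball_{Mn}(\sqrt{Mn\Px'}))/\vol(\voronoi)$, and using the covering-goodness of $\lat$ (equation~\eqref{covering}), together with $\det(\Ix_M+\Px'\Hm_i^T\Hm_i)^{-1}=\det\cov_i/(\Px')^M$, the exponent collapses to $-n\big(\tfrac{1}{2n}\sum_i\log\det(\Ix_M+\Px'\Hm_i^T\Hm_i)-R-\tfrac{M}{2}\log(1+\epsilon)-\epsilon'\big)$. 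Finally, the WLLN statement~\eqref{eq:rateMean} lets me replace $\tfrac{1}{2n}\sum_i\log\det(\cdot)$ by $\tfrac12\Ex[\log\det(\Ix_M+\Px'\Hm^T\Hm)]-\eta$ on the typical event, so the exponent is strictly positive whenever $R<\tfrac12\Ex[\log\det(\Ix_M+\Px'\Hm^T\Hm)]-\eta$; extracting one good lattice from the ensemble is then routine, and afterward the modulo-$\lat$ operation recovers $\tv$ exactly via~\eqref{mod_mod}, while the Euclidean (channel-matched) lattice decoder is no worse than the ellipsoidal one by the argument of~\cite{DMT_lattices}.

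The main obstacle, and the step deserving the most care, is the concentration claim $\prob(\zv\notin\genvoronoi)\to 0$ with a bound that is uniform over (the typical set of) channel realizations. Unlike the i.i.d.\ scalar case, $\zv$ here is a sum of a lattice-uniform term filtered by the realization-dependent matrix $\Um\Hm_s-\Ix$ and a Gaussian term filtered by $\Um$, and the quadratic form uses the realization-dependent weighting $\cov_s^{-1}$; one must verify that the "lattice smoothing" argument of~\cite{Erez_Zamir,Lattices_good} — which says a dithered point in a good Voronoi region behaves, for these purposes, like white noise of the right variance — survives the block-diagonal coloring. I would handle this by working in the whitened coordinates $\cov_s^{-1/2}\zv$, where the lattice contribution becomes a dither in a transformed (still good, by the simultaneous-goodness of the ensemble) lattice region of the correct second moment, and then apply the standard bound on the second moment of the modulo operation together with a Chebyshev/typicality step on the $Mn$ coordinates; the block structure keeps the coordinatewise variances bounded away from $0$ and $\infty$ (since the $\Hm_i$ have finite second moment and $\cov_i\preceq\Px'\Ix_M$), which is what makes the concentration uniform. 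A secondary but routine nuisance is bookkeeping the outage event: on the event in~\eqref{eq:rateMean} failing, one simply declares an error, contributing at most $\prob_{\text{out}}^{(\eta)}$, which $\to 0$; absorbing this into $\epsilon''$ closes the argument.
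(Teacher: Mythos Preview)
Your architecture matches the paper's proof essentially step for step: nested lattice encoding with dither, block-diagonal MMSE equalization~\eqref{eq:U_MSE_MIMO}, the channel-matched ellipsoidal region~\eqref{voronoi_ptp_MIMO_opt}, the three-term union bound~\eqref{error_prob_MIMO} (outage, $\zv\notin\genvoronoi$, ambiguity) via Loeliger/Minkowski--Hlawka averaging, the volume computation~\eqref{vol_omega_MIMO_opt}, and the WLLN replacement~\eqref{eq:rateMean}. The paper likewise defers the concentration step $\prob(\zv\notin\genvoronoi)\to 0$ to the argument of Appendix~\ref{appendix:z_dist}.

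The one place your proposal diverges, and where it would not go through as written, is your handling of that concentration step. You propose to whiten by $\cov_s^{-1/2}$ and argue that ``the lattice contribution becomes a dither in a transformed (still good, by the simultaneous-goodness of the ensemble) lattice region.'' Simultaneous goodness in~\cite{Lattices_good} means a \emph{single} lattice is good for covering, quantization, packing and coding at once; it does \emph{not} say that an arbitrary (realization-dependent, non-orthogonal) linear image of a good lattice remains good, and in general it does not. The paper's route---which is the standard Erez--Zamir device---avoids this entirely: it upper-bounds the density of the dithered lattice point by $e^{c_n}$ times a Gaussian density (\cite[Lemma~11]{Erez_Zamir}, reproduced as~\eqref{pdf_bound}), pushes the linear map through that Gaussian surrogate, and then applies a Chernoff bound to the resulting chi-square quadratic form. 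Swap your whitening-plus-goodness step for that density bound and the rest of your argument closes exactly as the paper's does.
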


\begin{proof}		
		
{\em Encoding:} Nested lattice codes are used, where the coarse lattice $\lat \in \real^{M n}$ has second moment~$\Px'$. The codeword is composed of $n$~super-symbols $\xv_i$ each of length~$M$, as shown in \eqref{codeword_MIMO}, which are transmitted throughout $n$ channel uses.

{\em Decoding:} The received signal can be expressed in the form $\yv=\Hm_s \xv+\wv$, where  $\Hm_s$ is a block-diagonal matrix whose diagonal block~$i$ is $\Hm_i$. The received signal $\yv$ is multiplied by $\Um_s \in \real^{N n \times M n}$ and the dither is removed as follows
\begin{align}
\yv' \triangleq & \Um_s^T \yv + \dv \nonumber \\
 = & \tv + \latpoint + \zv,
\label{sig_rx_2_MIMO}
\end{align}
where 
\begin{equation}
\zv \triangleq ( \Um_s^T \Hm_s - \Ix_{M n}) \xv + \Um \wv,
\label{eq_noise_MIMO}
\end{equation}
and $\tv$ is independent of $\zv$, according to Lemma~\ref{lemma:uniform}. $\Um_s$ is then a block-diagonal matrix, where the $M \times N$ equalization matrix at time~$i$ is the MMSE matrix given by
\begin{equation}
\Um_i= \Px'  (\Ix_{N} + \Px' \Hm_i \Hm_i^T)^{-1} \Hm_i.
\label{eq:U_MSE_MIMO}
\end{equation}

From \eqref{eq_noise},\eqref{eq:U_MSE_MIMO}, $\zv_i \in \real^{M}$ is expressed as
\begin{equation}
\zv_i = - ( \Ix_{M} + \Px' \Hm_i^T \Hm_i  )^{-1} \xv_i + \Px'  \Hm_i^T ( \Ix_{N} + \Px' \Hm_i \Hm_i^T )^{-1} \wv_i,
\label{eq:zi_MIMO}
\end{equation}
where  $\zv \triangleq [\zv_1^T, \ldots , \zv_n^T]^T$.
We apply a version of the ambiguity decoder proposed in~\cite{Loeliger}, defined by an ellipsoidal decision region $\genvoronoi \in \real^{M n}$, as follows
\begin{equation}
 \genvoronoi \triangleq \Big \{ \vv \in \real^{M n}~:~ \vv^T  \boldsymbol{\cov}_s^{-1} \vv \leq (1+\gamma) M n  \Big \},
 \label{voronoi_ptp_MIMO_opt}
 \end{equation}
 where $\boldsymbol{\cov}_s$ is a block-diagonal matrix, whose diagonal block~$i$, $\boldsymbol{\cov}_i$, is given by
 \begin{equation}
 \boldsymbol{\cov}_i \triangleq \Px' \big( \Ix_{M} + \Px' \Hm_i^T \Hm_i  \big)^{-1}.
 \label{cov_MIMO_opt}
 \end{equation}
 Let $\SNRm_i \triangleq \Ix_{M} + \Px' \Hm_i^T \Hm_i$. The volume of~$\genvoronoi$ is then
 \begin{equation}
 \vol(\genvoronoi) = (1+ \gamma)^{\frac{M n}{2}}  \vol \big( \mathcal{B}_{M n} (\sqrt{M n \Px'}) \big)  \,  \prod_{i=1}^n \det( \SNRm_i)^{\frac{-1}{2}}. 
 \label{vol_omega_MIMO_opt}
 \end{equation}

{\em Error Probability:} 
As shown in~\cite[Theorem~4]{Loeliger}, on averaging over the ensemble of fine lattices~$\Cs$ of rate~$R$ that belong to the class proposed in Section~\ref{sec:lattice}, 
\begin{align}
\frac{1}{|\Cs|} \sum_{\Cs} \, \prob_e  & <    \,  \prob_{\text{out}}^{(\eta)}  + \prob (\zv \notin \genvoronoi) +  (1+ \delta) \, \frac{\vol(\genvoronoi)}{\vol(\voronoi_1)} 
\nonumber \\
& =  \, \prob_{\text{out}}^{(\eta)} + \prob (\zv \notin \genvoronoi) + (1+ \delta) 2^{nR} \, \frac{\vol(\genvoronoi)}{\vol(\voronoi)}, 
\label{error_prob_MIMO}
\end{align}
for any $\delta > 0$, where~\eqref{error_prob_MIMO} follows from~\eqref{lattice_rate}. This is a union bound involving
{\em three} events: the event that the average throughput achieved by the channel sequence is bounded away from its statistical mean by more than~$\eta$, and the event that the noise vector is outside the decision region, i.e., $\zv \notin \genvoronoi$ and the event that the post-equalized point is in the intersection of two decision regions, i.e., $\big\{\yv' \in \{ \tv_1 + \genvoronoi \} \cap \{ \tv_2 + \genvoronoi\} \big\}$, where $\tv_1,~\tv_2 \in \lat_1$ are two distinct lattice points. 
 From~\eqref{eq:rateMean}, $\prob_{\text{out}}^{(\eta)}<\tilde{\epsilon}$ for any $\tilde{\epsilon}>0$ at large~$n$. Following in the footsteps in Appendix~\ref{appendix:z_dist}, $\prob (\zv \notin \genvoronoi)< \hat{\epsilon}$ for any $\hat{\epsilon} > 0$ for large~$n$. Let $\epsilon' \triangleq \tilde{\epsilon} + \hat{\epsilon}$. The error probability can then be bounded by 
\begin{equation}
\epsilon'' \triangleq \frac{1}{|\Cs|} \sum_{\Cs} \, \prob_e < \epsilon' + (1+ \delta) 2^{nR} \frac{\vol(\genvoronoi)}{\vol(\voronoi)}, 
\label{error_prob_2_MIMO}
\end{equation}
for any $\gamma,\delta>0$. 
The second term in~\eqref{error_prob_2_MIMO} is then 
\begin{equation}
\epsilon_{\text{avg}} \triangleq 2^{ -  n \Big (  -  R +  \frac{1}{2 n} \sum_{i=1}^{n}  \log \det (  \SNRm_i ) - \epsilon''' \Big ) } ,
\label{eq:exponential_MIMO}
\end{equation}
where 
\begin{equation}
\epsilon'''  \triangleq   \frac{1}{ n} \log \big( \frac{\vol(\ball_{M n} (\sqrt{M n \Px'}))}{\vol(\voronoi)} \big) + \log({1+ \gamma})^{\frac{M}{2}} + \frac{1}{n}  \log  (  1+\delta )
\label{xi_MIMO}
\end{equation}
From~\eqref{covering}, the first term in~\eqref{xi_MIMO} vanishes, and so do the second and third terms as $n$ increases.  The probability of error averaged over the codebooks in $\Cs$ is bounded by~
\begin{equation}
\epsilon'' \triangleq \epsilon' + \epsilon_{\text{avg}}.
\label{error_averaging}
\end{equation}
Then there exists at least one codebook that achieves $R <  \frac{1}{2 n} \sum_{i=1}^{n}  \log \det (  \SNRm_i )$, which converges to~\eqref{capacity_eqn}. The remainder of the proof follows Section~\ref{sec:heuristic}. 
\end{proof}

Note that Lemma~\ref{lemma:ptp_capacity} does not imply the rate in~\eqref{capacity_eqn} is {\em universally} achievable, since it does not guarantee the existence of a single codebook that achieves this rate for all fading sequences drawn from an underlying distribution. A similar approach was adopted in~\cite{paper2}, whose universality is not conclusive. In the sequel we discuss the rates achievable using universal codebooks. Similar to Section~\ref{sec:CSIT} we first address channels with finite-support fading distributions and then extend the result to continuous-valued, unbounded fading.


\subsection{Finite-Support Fading Distributions}
\label{sec:CSIR_discrete}

We address point-to-point block-fading channels with coherence length~$b$, whose channel coefficients are drawn from a discrete distribution with finite-support~$\ss$. The following result utilizes the weak typicality arguments provided in Section~\ref{sec:typicality} to show the existence of a nested pair of lattice codes that achieve rates within a constant gap to ergodic capacity.

 \begin{theorem}
 For a stationary and ergodic block-fading $M \times N$ MIMO channel with coherence interval $b$ and fading coefficients drawn from a finite-support distribution~$\prob_h$, a universal nested lattice code exists that achieves rates within a constant gap $\gap \triangleq \frac{M N}{b} \hbar(H)$ bits per channel use of the ergodic capacity, where~$\hbar(H)$ is the entropy rate of the fading process.
 \label{theorem:cap_discrete}
 \end{theorem}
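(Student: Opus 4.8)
The plan is to obtain a universal codebook by a covering (union-bound) argument over the weakly typical set of fading sequences, leveraging the fact that the number of such sequences is only exponentially small compared with $2^{nR}$ when the rate backoff is at least $\Delta = \frac{MN}{b}\hbar(H)$. First I would set up the per-realization achievability exactly as in Lemma~\ref{lemma:ptp_capacity}: fix a nested lattice pair $\lat \subseteq \lat_1$ from the ensemble of Section~\ref{sec:lattice} with coarse second moment $\Px'$, employ MMSE equalization $\Um_i$ as in~\eqref{eq:U_MSE_MIMO}, and use the ellipsoidal ambiguity decoder $\genvoronoi$ of~\eqref{voronoi_ptp_MIMO_opt}--\eqref{cov_MIMO_opt}, but now the decision region must be \emph{matched to the actual channel realization} $\{\Hm_i\}$ (this is the ``channel-matching decision region'' alluded to in the introduction; the receiver has CSIR, so it can compute $\genvoronoi$ per realization). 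For a \emph{fixed} fading sequence $\Hm_s = \mathrm{diag}(\Hm_1,\ldots,\Hm_n)$, averaging the ambiguity-decoder bound~\cite[Theorem~4]{Loeliger} over the fine-lattice ensemble $\Cs$ gives, as in~\eqref{error_prob_MIMO}--\eqref{eq:exponential_MIMO},
\begin{equation}
\Ex_{\Cs}\big[\prob_e(\Hm_s)\big] \;<\; \prob(\zv \notin \genvoronoi \mid \Hm_s) \;+\; (1+\delta)\, 2^{nR}\,\frac{\vol(\genvoronoi(\Hm_s))}{\vol(\voronoi)},
\label{eqn:plan_perH}
\end{equation}
and using $\vol(\genvoronoi(\Hm_s))$ from~\eqref{vol_omega_MIMO_opt} together with the covering goodness of $\lat$, the second term is $2^{-n(\frac{1}{2n}\sum_i \log\det\SNRm_i - R - o(1))}$.

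The key new step is to make this hold \emph{simultaneously} for all channel realizations of interest. Observe that the $i$-th fading block $\Hm_{(j)}$ (for $j=1,\ldots,n'$) is an i.i.d.\ draw from the finite-support distribution $\prob_h$ on $MN$-tuples of coefficients; the block sequence $(\Hm_{(1)},\ldots,\Hm_{(n')})$ is therefore i.i.d., and by Definition~\ref{def:typicality_weak} and the cardinality bound~\eqref{robust_w2}, the weakly typical set $T_\epsilon^{(W)}$ of block sequences has $|T_\epsilon^{(W)}| \le 2^{n'(\hbar(H)+\epsilon)} = 2^{n(\frac{\hbar(H)+\epsilon}{b})}$, where $\hbar(H)$ is the entropy rate of the (block) fading process; and $\prob(\text{sequence} \notin T_\epsilon^{(W)}) < \epsilon$. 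Now I would design a \emph{single} codebook for the worst case over $T_\epsilon^{(W)}$: since the decision region is channel-matched, the average (over $\Cs$) of $\max_{\Hm_s \in T_\epsilon^{(W)}}\prob_e(\Hm_s)$ is bounded, via a union bound over the at most $2^{n(\hbar(H)+\epsilon)/b}$ realizations in $T_\epsilon^{(W)}$ scaled by each coordinate ($MN$ scalar channels per block contribute the $MN$ factor — more precisely the entropy rate of the full scalar channel process is $\frac{MN}{b}\hbar(H)$ if $\hbar(H)$ denotes the per-coefficient entropy, which I would state carefully), by
\begin{equation}
\Ex_{\Cs}\Big[\max_{\Hm_s \in T_\epsilon^{(W)}}\prob_e(\Hm_s)\Big] \;\le\; |T_\epsilon^{(W)}| \cdot \Big( \hat{\epsilon} + (1+\delta)\,2^{-n(\frac{1}{2n}\sum_i\log\det\SNRm_i - R - o(1))}\Big),
\label{eqn:plan_union}
\end{equation}
and for sequences in $T_\epsilon^{(W)}$ one has $\frac{1}{2n}\sum_i \log\det\SNRm_i \ge \frac12\Ex[\log\det(\Ix_N + \Px'\Hm\Hm^T)] - \eta$ up to the typicality slack, so the exponent stays negative provided $R < C - \Delta - O(\eta,\epsilon)$ with $\Delta = \frac{MN}{b}\hbar(H)$; the $2^{n\Delta/(MN\cdot\text{normalization})}$ blow-up from $|T_\epsilon^{(W)}|$ is exactly absorbed by the rate backoff $\Delta$. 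Hence $\Ex_{\Cs}[\max_{\Hm_s\in T_\epsilon^{(W)}}\prob_e] \to 0$, so there \emph{exists} a fixed fine lattice for which $\max_{\Hm_s \in T_\epsilon^{(W)}}\prob_e(\Hm_s)$ is small; and for the $\epsilon$-probability atypical sequences we bound $\prob_e \le 1$, giving total error $< \epsilon + o(1)$. After decoding to $\hat\tv = \tv + \latpoint$ we recover $\tv$ by the modulo-$\lat$ step~\eqref{mod_lambda}, as before, and the Euclidean decoder dominates the ambiguity decoder as in~\cite{DMT_lattices}.

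The main obstacle I anticipate is the interplay between the union bound over $T_\epsilon^{(W)}$ and the per-realization error exponent: one must verify that the $o(1)$ terms coming from covering goodness and the typicality slack $\eta,\epsilon$ are genuinely $o(1)$ \emph{uniformly} over the (growing) family $T_\epsilon^{(W)}$, and that the definition of $\hbar(H)$ — entropy rate of which process, scalar-coefficient versus matrix-valued block, and the bookkeeping of the factor $\frac{MN}{b}$ — is consistent throughout, so that the stated gap $\Delta = \frac{MN}{b}\hbar(H)$ emerges cleanly rather than with spurious constants. A secondary subtlety is that the decision region depends on the realization, so one should confirm that the second-moment/power accounting of the dithered lattice codeword is unaffected (it is, since $\Dm$-type scaling is absent here and the coarse lattice has fixed second moment $\Px'$), and that $\prob(\zv \notin \genvoronoi \mid \Hm_s)$ is small uniformly over $\Hm_s$ — which follows because, conditioned on any realization, $\zv$ has the block-diagonal covariance $\cov_s$ of~\eqref{cov_MIMO_opt} and the concentration argument of Appendix~\ref{appendix:z_dist} applies verbatim per block. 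Finally I would remark that the gap vanishes, e.g., when $b \to \infty$ or when $\hbar(H)$ is small (near-deterministic fading), recovering the ``infinitesimal in special cases'' claim.
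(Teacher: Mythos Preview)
Your approach is correct and reaches the same gap $\Delta = \frac{MN}{b}\hbar(H)$ via the same mechanism the paper uses: a union bound over the weakly typical set of channel sequences, whose cardinality $|T_\epsilon^{(H)}| \le 2^{n\,MN(\hbar(H)+\epsilon)/b}$ is absorbed into the rate backoff. The packaging differs. The paper first proves an expurgation step (Lemma~\ref{lemma:count}): for each fixed $\Hm_s$, a fraction $\tfrac{\kappa-1}{\kappa}$ of codebooks in $\Cs$ have error at most $\kappa\epsilon''$; it then intersects these ``good'' subsets over all $|T_\epsilon^{(H)}|$ typical channels, which forces $\kappa > |T_\epsilon^{(H)}|$ and requires a separate verification that the ensemble is large enough ($|\Cs|\ge n^{n/2}$). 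You instead bound $\Ex_{\Cs}\big[\max_{\Hm_s\in T}\prob_e\big]$ directly by $\sum_{\Hm_s\in T}\Ex_\Cs[\prob_e]$ and invoke the probabilistic method, which is more streamlined and sidesteps the ensemble-size check altogether. One correction to your union-bound display: as written it multiplies $\hat\epsilon = \prob(\zv\notin\genvoronoi\mid\Hm_s)$ by $|T_\epsilon^{(W)}|$, and that product need not vanish without inflating the decision-region parameter $\gamma$ (costing an extra $\tfrac{M}{2}\log(1+\gamma)$ in rate). The paper avoids this because $\hat\epsilon$ is codebook-independent, so in Lemma~\ref{lemma:count} only the ambiguity term $\epsilon_{\text{avg}}$ inherits the factor $\kappa$; in your framework the fix is to split $\max_{\Hm_s}\prob_e \le \sup_{\Hm_s}\hat\epsilon(\Hm_s) + \sum_{\Hm_s\in T}\prob_{\text{amb}}(\Hm_s)$ \emph{before} averaging over $\Cs$, which your later remark about the uniform smallness of $\prob(\zv\notin\genvoronoi\mid\Hm_s)$ already anticipates. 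With that adjustment your argument yields exactly the stated gap; the paper then handles non-white optimal input covariance by the linear pre-shaping $\check{\xv}_i = \covx^{*1/2}\xv_i$, which you should append as well.
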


\begin{proof}

 Lemma~\ref{lemma:ptp_capacity} ensures the existence of one codebook in~$\Cs$ that achieves the rate in~\eqref{capacity_eqn} with error probability that is less than~$\epsilon''$. We now show that if we allow a  multiplicative increase in the error probability, numerous codebooks in~$\Cs$ can support the rate~$R$ in~\eqref{capacity_eqn}.
 \begin{lemma}
 For the channel under study in Lemma~\ref{lemma:ptp_capacity}, at least~$\frac{\kappa-1}{\kappa} \, |\Cs|$ codebooks in~$\Cs$ achieve the rate~$R$ in~\eqref{capacity_eqn} with at most~$\kappa \epsilon''$ error probability, for any $\kappa \in \ints^+$ where~$\kappa<|\Cs|$.
 \label{lemma:count}
 \end{lemma}

 \begin{proof} 
 We expurgate codebooks from~$\Cs$ as follows. First, arrange the codebooks in  descending order of the error probability $\epsilon_1, \ldots, \epsilon_{|\Cs|}$on  the MIMO channel defined in Lemma~\ref{lemma:ptp_capacity}. Then, discard the first~$\frac{1}{\kappa} |\Cs|$ codebooks. From~\eqref{error_averaging}, the error probability of each of the remaining $\frac{\kappa-1}{\kappa} |\Cs|$ codebooks is then bounded by~$\epsilon' + \kappa \epsilon_{\text{avg}}$, as follows\,%
 \footnote{$\epsilon'$ in~\eqref{error_prob_2_MIMO} is independent of the codebook, so the average over codebooks is also $\epsilon'$.}
 \begin{align}
 |\Cs| \epsilon_{\text{avg}} = & \,  \sum_{\ell=1}^{|\Cs|/\kappa} \epsilon_\ell \, + \epsilon_{1+|\Cs|/\kappa} \, + \, \sum_{\ell=2+|\Cs|/\kappa}^{|\Cs|} \epsilon_\ell  \, 
 \nonumber \\
 \geq & \, \sum_{\ell=1}^{|\Cs|/\kappa} \epsilon_\ell  + \, \epsilon_{1+|\Cs|/\kappa}   \,
 \geq \, (1+\frac{1}{\kappa}|\Cs|) \, \epsilon_{1+|\Cs|/\kappa} \, ,
 \label{bounding}
 \end{align}    
 Hence, 
 \begin{equation}
 \epsilon_{1+|\Cs|/\kappa} \, \leq \frac{\kappa}{1+\frac{\kappa}{|\Cs|}} \, \epsilon_{\text{avg}} \, < \, \kappa \, \epsilon_{\text{avg}}.
 \label{bounding_final}
 \end{equation} 
 Since~$\epsilon_{l+|\Cs|/\kappa} \leq \epsilon_{1+|\Cs|/\kappa}$ for any~$\ell>1$, each of the last~$\frac{\kappa-1}{\kappa} \, |\Cs|$ codebooks in~$\Cs$ have error probability that does not exceed~$\epsilon' + \kappa \epsilon < \kappa \epsilon''$. 
 \end{proof}

 To summarize, Lemma~\ref{lemma:count} shows that given a channel matrix~$\Hm_s$, a constant fraction of all codebooks in~$\Cs$ achieves the rate in~\eqref{capacity_eqn}, e.g., for~$\kappa=100$, at least $99\%$ of the codebooks in~$\Cs$ incur no more than~$100 \epsilon''$ error probability, where~$\epsilon''$ can be made arbitrarily small by increasing~$n$. Note that the proof technique in Lemma~\ref{lemma:count} is not limited to lattice codes, and can be used for other random ensembles of codebooks. 

Now, assume a stationary and ergodic block-fading MIMO channel with~$n'$ blocks, whose $MNn'$ distinct channel coefficients are drawn according to a distribution~$\prob(H)$ with a finite support of size~$|\ss|$. The $\epsilon$-weakly typical set~$T_{\epsilon}^{(W)} (H)$ of channel sequences~$\Hm_s$ is denoted hereafter by $T_{\epsilon}^{(H)}$. We aim at answering the following question: under what rates can a single codebook in~$\Cs$ achieve vanishing error probability for all channel sequences~$\Hm_s \in T_{\epsilon}^{(H)} \, ?$ 

 Denote by~$\Cs_j$ the set of codebooks that achieve at most~$\kappa \epsilon''$ error probability for the channel matrix~$\Hm_s^{(j)}$ indexed by~$j$. Recall the cardinality of each of these sets is~$\frac{\kappa-1}{\kappa} \, |\Cs|$. The event that no codebook is universal over typical channel sequences can be represented by either of the following two conditions:
 \begin{align}
 \Cs_1 \cap \Cs_2 \cap \ldots \cap \Cs_{|T_{\epsilon}^{(H)}|} = \, \phi,     \nonumber \\
 \Cs_1^c \cup \Cs_2^c \cup \ldots \cup \Cs_{|T_{\epsilon}^{(H)}|}^c = \, \Cs,  
 \label{channel_intersection}
 \end{align}
 where $\mathcal{A}^c$ denotes the complement of the set~$\mathcal{A}$. Hence, it can be shown via a union bound that 
 \begin{equation}
 |\Cs_1^c \cup  \ldots \cup \Cs_{|T_{\epsilon}^{(H)}|}^c| \leq \frac{1}{\kappa} \, |\Cs| \, |T_{\epsilon}^{(H)}|,
 \label{union_bounding}
 \end{equation}
  Hence, from~\eqref{channel_intersection} a universal codebook with negligible error probability is guaranteed to exist, as long as~$\kappa> |T_{\epsilon}^{(H)}|$. On substituting in~\eqref{error_prob_2_MIMO},\,\eqref{eq:exponential_MIMO},
 \begin{align}
\prob_e = &  \,   \prob \big(e| \Hm_s \in T_{\epsilon}^{(H)} \big) \, \prob \big( \Hm_s \in T_{\epsilon}^{(H)} \big) 
\twocolbreak
\, + \,  \prob \big(e | \Hm_s \notin T_{\epsilon}^{(H)} \big) \, \prob \big( \Hm_s \notin T_{\epsilon}^{(H)} \big) \nonumber \\
\leq &  \,   \prob \big(e| \Hm_s \in T_{\epsilon}^{(H)} \big)  \, + \, \prob \big( \Hm_s \notin T_{\epsilon}^{(H)} \big) \nonumber \\    
  < &   \, \kappa \epsilon \, + \, \epsilon'  \, + \, \epsilon_T   \nonumber \\
< & \,  2^{-n \big( -R+\frac{1}{2} \Ex \big[ \log \det (\Ix_{M}+ \Px' \Hm^T \Hm) \big]  - \frac{1}{n} \log |T_{\epsilon}^{(H)}|  -\epsilon''' - \eta \big) }
\twocolbreak
\, + \, \epsilon' \, + \epsilon_T,
\label{prob_error_universal}
\end{align}
where $\prob \big( \Hm_s \notin T_{\epsilon}^{(H)} \big)\triangleq \epsilon_T$. Based on weak typicality arguments in Section~\ref{sec:typicality}, $|T_{\epsilon}^{(H)}| \leq 2^{MNn'\big( \hbar(H) + \epsilon_T \big)}$. Hence, reliable rates can be achieved as long as
 \begin{equation}
 R < \frac{1}{2} \Ex \big[ \log \det (\Ix_{M}+ \Px' \Hm^T \Hm) \big] - \frac{MN}{b} \big( \hbar(H) + \epsilon_T \big) -\epsilon''' - \eta \, ,  
 \label{rate_universal}
 \end{equation} 
 where $\epsilon''',\epsilon_T$ and $\eta$ can be made arbitrarily small by increasing~$n$. Since the third term in~\eqref{prob_error_universal} diminishes with~$n$, there exists $n_{\epsilon} \in \ints^+$ such that for all $n> n_{\epsilon}$, $\prob_e$ in~\eqref{prob_error_universal} satisfies $\prob_e < 2 \epsilon'$. The final step to complete the proof is showing that the number of possible channel matrices~$\Hm_s$ does not exhaust~$|\Cs|$, otherwise~$\kappa> |T_{\epsilon}^{(H)}|$ cannot be guaranteed. From the lattice construction in~\cite[Section III]{Lattices_good},  there exists at least~$q^n$  generator matrices that generate unique lattices, where~$q$ is the size of the prime field from which the lattice is drawn. Since $\sqrt{n}/q \to 0$ as $n \to \infty$, a lower bound on~$|\Cs|$ is~$n^{n/2}$. Since the number of possible channels cannot exceed~$|T_{\epsilon}^{(H)}| < 2^{MNn \big( \hbar(H) + \epsilon_T \big)}$ where~$2^{MN \big( \hbar(H) + \epsilon_T \big)}$ is non-increasing with~$n$, there exists~$\check{n} \in \ints^+$ such that for all $n>\check{n}$, $n^{n/2} > 2^{MNn \big( \hbar(H) + \epsilon_T \big)}$. Hence, $\kappa> |T_{\epsilon}^{(H)}|$ is guaranteed for large enough~$n$.

The previous result depicts the gap to capacity for channels whose optimal input signal covariance is $ \Px' \Ix_M$. The extension to channel distributions whose optimal input covariance is non-white, i.e., not a scaled identity, is straightforward. Let~$\covx^*$ denote the optimal input covariance matrix, i.e., $\covx^* = \text{arg} \max ~ C$ given in~\eqref{capacity_MIMO_gen}. The transmitted codeword is then  
\begin{equation}
\check{\xv} \triangleq [\covx^{* \frac{1}{2}} \xv_1^T  ,  \ldots ,  \covx^{* \frac{1}{2}} \xv_n^T]^T,
\label{codeword_MIMO_gen}
\end{equation}
where~$\xv$ is drawn from a nested lattice code whose coarse lattice~$\lat \in \real^{M n}$ has a unit second moment. Hence, the received signal can be expressed by  
\begin{align}
\yv_i & =  \Hm_i \covx^{* \frac{1}{2}} \xv_i + \wv_i   \nonumber \\
      & \triangleq  \check{\Hm}_i   \xv_i + \wv_i,
\label{sig_Rx_MIMO_gen}
\end{align}
where~$\check{\Hm}_i \triangleq \Hm_i \covx^{* \frac{1}{2}}$. From~\eqref{sig_Rx_MIMO_gen} the following rates are achievable
\begin{align*}
R <& \,  \frac{1}{2} \Ex \Big[ \log{ \det \big ( \Ix_{M} +  \check{\Hm}^T \check{\Hm} \big ) } \Big] - \frac{MN}{b} \hbar(H) \\ 
  =& \,  \frac{1}{2} \Ex \Big[ \log{ \det \big ( \Ix_{N} + \Hm \covx^* \Hm^T \big ) }  \Big] - \frac{MN}{b} \hbar(H) ,
\end{align*}
which is the optimal value of the expression in~\eqref{capacity_MIMO_gen}. This concludes the proof of Theorem~\ref{theorem:cap_discrete}.	
\end{proof}

\begin{corollary}
The gap to capacity in Theorem~\ref{theorem:cap_discrete} can be bounded from above by $ \frac{M N}{b} \log |\ss|$ bits per channel use, where $\ss$ is the supporting set of the channel coefficients.
\label{corollary:cardinality}
\end{corollary}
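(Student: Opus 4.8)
The plan is simply to bound the entropy rate $\hbar(H)$ that appears in the gap $\gap = \frac{MN}{b}\hbar(H)$ of Theorem~\ref{theorem:cap_discrete} by a quantity that depends only on the cardinality $|\ss|$ of the fading support. Recall that the block-fading channel has coherence length $b$, so within each block of $b$ consecutive channel uses the $MN$ entries of each $\Hm_i$ are constant; the randomness over a block therefore consists of $MN$ scalar coefficients, each drawn from $\ss$. The entropy rate $\hbar(H)$ is, by Definition~\ref{def:typicality_weak}, the per-channel-use limit $\lim_{n\to\infty}\frac1n H(\Hm_1,\ldots,\Hm_n)$; over $n=n'b$ channel uses there are exactly $n'$ independent blocks, each carrying $MN$ coefficients from $\ss$.

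First I would write $H(\Hm_1,\ldots,\Hm_n) = n' H(\Hm_{\text{block}})$ by the independence of distinct fading blocks, where $\Hm_{\text{block}}$ collects the $MN$ coefficients active on one block. Next, the elementary bound $H(\Hm_{\text{block}}) \le MN \log|\ss|$ follows because $\Hm_{\text{block}}$ is a discrete random vector supported on a set of at most $|\ss|^{MN}$ points, and the uniform distribution maximizes entropy on a finite set. Dividing by $n=n'b$ gives $\frac1n H(\Hm_1,\ldots,\Hm_n) \le \frac{MN}{b}\log|\ss|$, uniformly in $n'$, and hence $\hbar(H)\le \frac{MN}{b}\log|\ss|$. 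Substituting into $\gap = \frac{MN}{b}\hbar(H)$ yields $\gap \le \frac{M N}{b}\cdot\frac{MN}{b}\log|\ss|$ --- wait, that is not quite the claimed form, so instead the correct bookkeeping is: the factor $\frac{MN}{b}$ in Theorem~\ref{theorem:cap_discrete} already contains one factor of $\frac1b$ because the entropy rate is measured per channel use while the fading is drawn per block; concretely $\hbar(H)=\frac{1}{b}H(\Hm_{\text{block}})\le\frac{1}{b}\cdot MN\log|\ss|$ is the normalization used in the theorem, so in fact $\gap = \frac{MN}{b}\hbar(H)$ should be read with $\hbar(H)\le\log|\ss|$ being the entropy rate of the scalar fading process, giving $\gap\le\frac{MN}{b}\log|\ss|$ directly.

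The only genuinely delicate point is pinning down exactly which normalization of ``entropy rate'' is intended in the statement of Theorem~\ref{theorem:cap_discrete} so that the constant comes out as $\frac{MN}{b}\log|\ss|$ and not $\frac{(MN)^2}{b^2}\log|\ss|$; this is a matter of tracking how $|T_\epsilon^{(H)}|\le 2^{MNn'(\hbar(H)+\epsilon_T)}$ was used in deriving~\eqref{rate_universal}, where $\hbar(H)$ there is the per-coefficient entropy rate of the underlying fading process. With that reading, $\hbar(H)\le\log|\ss|$ is immediate from the maximum-entropy property of the uniform distribution on $\ss$, and the corollary follows by one substitution. I would present it in two or three lines: state $\hbar(H)\le\log|\ss|$ with the max-entropy justification, then substitute into the gap expression of Theorem~\ref{theorem:cap_discrete}.
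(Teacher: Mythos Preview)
Your proposal is correct and is essentially the paper's own argument: the paper's proof is the single line ``$\hbar(H)\le\log|\ss|$,'' which is exactly the max-entropy bound you invoke, followed by substitution into $\gap=\frac{MN}{b}\hbar(H)$. Your eventual reading of the normalization is the right one---in the derivation of~\eqref{rate_universal} the quantity $\hbar(H)$ is the per-scalar-coefficient entropy rate (the bound $|T_\epsilon^{(H)}|\le 2^{MNn'(\hbar(H)+\epsilon_T)}$ makes this explicit), so $\hbar(H)\le\log|\ss|$ directly and no extra factor of $MN/b$ appears.
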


\begin{proof}
The proof follows directly from Theorem~\ref{theorem:cap_discrete} since $\hbar(H) \leq \log |\ss|$.
\end{proof}

\begin{figure}
\centering
\includegraphics[width=\Figwidth]{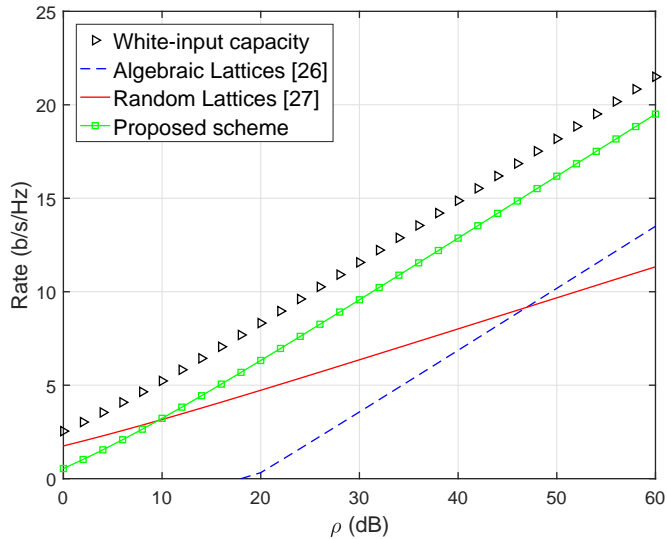}
\caption {Achievable rates for the $2 \times 2$ discrete-fading MIMO channel.} 
\label{fig:MIMO} 
\end{figure}

The performance of the proposed scheme is investigated under a $2 \times 2$~MIMO block-fading channel with~$b=20$ and~$|\ss| = 10^3$, where the channel coefficients are drawn uniformly and independently from uniformly-spaced values within range $[-5,5]$. For this scenario $\gap <2$ bits per channel use. Rates are plotted in Fig,~\ref{fig:MIMO} and compared with the white-input capacity, as well as the rates achieved by the algebraic lattice scheme in~\cite{Luzzi_j} and the lattice coding scheme in~\cite{Journal1}. The proposed scheme is shown to outperform the baseline schemes for moderate and high SNR values.


\subsection{Continuous-Valued, Unbounded Fading Distributions}
\label{sec:CSIR_cont}

\begin{figure}
\centering
\includegraphics[width=\Figwidth]{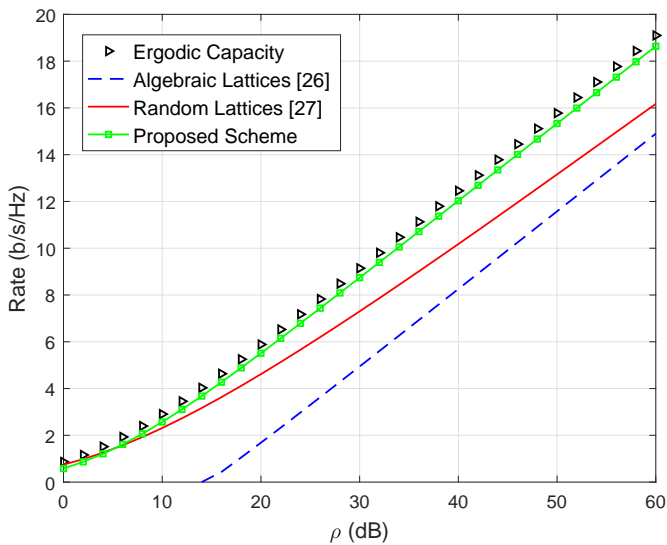}
\caption {Comparison of the achievable rates under Rayleigh fading.} 
\label{fig:Rayleigh} 
\end{figure}

The techniques developed in Section~\ref{sec:CSIR_discrete} produce rates within a gap to capacity that is tied to the number of distinct fading states, and hence cannot be directly applied to continuous-valued channels. The main idea of this section is a refinement of the technique developed in Section~\ref{sec:continuous1}, where the scheme is designed for a quantized version of a continuous-valued channel so that it can also work on the continuous-valued channel itself.
The quantized channel gains are derived by mapping the continuous channel gains in each quantization interval to the lower end of that interval.
In this manner the magnitude of each continuous-valued channel coefficient is larger than its quantized counterpart, so it is easy to show that the encoder/decoder performance over the continuous channel is no worse than the performance over the quantized channel.
The disadvantage of this strategy is that it forfeits the power in the tail of the fading distribution, and also within each quantization interval it cannot utilize the fading gain that is beyond the lower end of that interval. Unlike the perfect CSIT case in Section~\ref{sec:continuous1}, the quantization cannot be made arbitrarily fine since the universality penalty that follows from the discrete channel analysis is proportional to the number of fading states. We now develop the strategy in more detail and show that with careful optimization of the quantization intervals, one can produce performance guarantees that come very close to capacity.

Concretely, the scheme is similar to the discrete-valued fading case
up to the MMSE scaling in~\eqref{eq:U_MSE_MIMO}, where the MMSE scaling is pursued using the continuous-valued channel coefficients. The channel magnitude~$|h_i|$ is then quantized into~$L+1$ sets $Q_\ell \triangleq [q_{\ell-1},q_{\ell}]$, where  $q_0\triangleq 0$ and $q_{L+1} \triangleq \infty$, and each channel coefficient is quantized to the lower limit of the bracket $Q_\ell$ in which it belongs, producing a discrete valued sequence. 
The gap to capacity $\gap \triangleq C-R$ is then bounded as follows\,%
\begin{align}
\gap \leq & \, \Ex \big[ \log (1+ \Px |h|^2) \big] \, - \,  \Ex \big[\log (1+ \Px q^2) \big] + \, \frac{1}{b} \log \big( L+1  \big)   \nonumber \\
= & \frac{1}{b} \log \big( L+1  \big)  + \Ex \big[\log \Big(\frac{1+ \Px |h|^2}{1+\Px q_L^2} \Big) \big| |h| > q_L \big] \prob \big(|h| > q_L \big) 
\nonumber \\
& + \sum_{\ell=1}^L  \Ex \big[\log \Big(\frac{1+ \Px |h|^2}{1+\Px q_{\ell-1}^2}\Big) \big| |h| \in Q_\ell \big] \prob \big(|h| \in Q_\ell \big),
\label{eqn_gap_continuous_CSIR}
\end{align}
where~$q$ is a random variable with support~$\{q_0, \ldots,q_L\}$ representing the quantized channel magnitude. The first term in~\eqref{eqn_gap_continuous_CSIR} follows from Corollary~\ref{corollary:cardinality}.
The minimization of the gap to capacity requires the design of a quantizer with $L+1$~levels as mentioned above that minimizes the sum of the three terms in~\eqref{eqn_gap_continuous_CSIR}. 
Recall in Section~\ref{sec:continuous1} we used a uniform quantizer, which is not necessarily optimal but was sufficient for our purposes at that point. The optimal quantizer for~\eqref{eqn_gap_continuous_CSIR} may be derived by calculating the slope of the cost function with respect to each $q_\ell$ and forcing these slopes to be equal (KKT conditions). In iterative optimization one would update in each iteration the $q_\ell$ with the largest slope. Since the cost function involves integration over fading, the cost slopes can be calculated using the Leibnitz integration rule. These calculations are somewhat cumbersome, motivating us to find a more easily described quantizer that also experimentally yields a small gap after optimization. To this end, we design a quantizer so that the probability of the fading coefficient falling into each of its bin is equal, i.e., $\prob \big( |h| \in Q_1  \big)  = \ldots =   \prob \big( |h| \in Q_L  \big)$. This results in the following simplification in the cost function
\begin{align}
\gap \leq & \frac{1}{b} \log \big( L+1  \big) + \gamma_{L} \, \Ex \big[\log \Big(\frac{1+ \Px |h|^2}{1+\Px q_L^2} \Big) \big| |h| > q_L \big]  \nonumber \\
& \, +  \frac{1- \gamma_{L}}{L}  \, \sum_{\ell=1}^L  \Ex \big[\log \Big(\frac{1+ \Px |h|^2}{1+\Px q_{\ell-1}^2}\Big) \big| |h| \in Q_\ell \big] ,
\label{eqn_gap_continuous_CSIR_equal}
\end{align}
where $\prob \big(|h| > q_L \big) \triangleq \gamma_{L}$ is the  complement of the cumulative distribution function of~$|h|$. Note that $\gamma_{L} = e^{-q_L^2}$ for Rayleigh fading with normalized gain.
Once this structure is fixed, only two parameters~$L, \, q_L$ need to be optimized. 
In our simulations, we use two-dimensional grid search to find~$L, \, q_L$ that minimize the gap expression in~\eqref{eqn_gap_continuous_CSIR_equal}. The rates achieved under Rayleigh block fading with~$b=20$ are plotted in Fig.~\ref{fig:Rayleigh}, which demonstrates that the gap to capacity is within 0.5~bits per channel use up to SNR of $60$~dB.%
\footnote{Although Fig.~\ref{fig:Rayleigh} indicates that the proposed scheme achieves close-to-capacity rates under Rayleigh fading, it remains unverified whether it is within a constant gap to capacity for all SNR.}


\section{Conclusion} 
\label{sec:conc}

This paper demonstrates that a precoded lattice coding scheme achieves the capacity of the fading point-to-point channel with channel state information at both the transmitter and the receiver. A key difference with earlier ergodic fading results is non-separable coding. Furthermore, the decision regions are fixed for a given channel distribution. The proposed scheme is first discussed in the context of a heuristic channel model. The results are then extended to arbitrary MIMO channel distributions with robustly typical realizations, and to continuous-valued channels. With CSIR but no CSIT, an alternative decoding strategy is presented for block-fading MIMO channels, where channel-matching decision regions are proposed. Under fading drawn from a discrete distribution with finite support, the achieved rates are within an SNR-independent gap to capacity. The scheme is also extended to continuous-valued fading where it is shown that achievable rates approach capacity under Rayleigh fading.


\section*{Acknowledgment}
The authors acknowledge the associate editor and the reviewers for useful comments that led to tightening the results in Section~\ref{sec:CSIR}.

\appendices

\section{Precoded Signal Satisfies the Power Constraint}
\label{appendix:power}

We first present the following lemma, whose proof can be found in~\cite{weighted_avg}.

\begin{lemma} \cite[Theorem~1]{weighted_avg}.
Let~$\{ \beta_1, \ldots, \beta_n \}$ be a monotonically increasing sequence of finite positive weights, where $\displaystyle{\lim_{n \to \infty}} \sum_{i=1}^n \beta_i = \infty$. Then, for a sequence of random variables $q_1, \ldots, q_n$ with mean~$\mu_q$, 
\begin{equation}
 \Ex \Big[ \frac{1}{\sum_{\ell = 1}^n \beta_{\ell}} \sum_{i=1}^n \beta_i q_i   \Big] < \mu_q + \epsilon \, ,
\label{weightedmean}
\end{equation}   
where $\epsilon>0$ vanishes with~$n$.
\label{lemma:weighted_avg}
\end{lemma}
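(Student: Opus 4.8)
The plan is to prove this by a summation‑by‑parts (Abel) estimate combined with the standard Toeplitz‑type averaging principle: a weighted average of a sequence that settles down to a limit, taken with non‑negative weights whose total mass diverges, itself converges to that limit. First I would reduce to a deterministic statement. Writing $\mu_i \triangleq \Ex[q_i]$ and $S_n \triangleq \sum_{\ell=1}^n \beta_\ell$, linearity of expectation gives $\Ex\big[\tfrac{1}{S_n}\sum_{i=1}^n \beta_i q_i\big] = \tfrac{1}{S_n}\sum_{i=1}^n \beta_i \mu_i$, so it suffices to show this deterministic weighted average lies below $\mu_q+\epsilon$ for all large $n$. The hypothesis that the $q_i$ "have mean $\mu_q$" I would use in the form $\mu_i \to \mu_q$, i.e.\ for every $\eta>0$ there is a fixed $N$ with $|\mu_i - \mu_q| < \eta$ for all $i>N$.

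Next I would split the sum at the threshold $N$, using $\tfrac{1}{S_n}\sum_{i=1}^n \beta_i \mu_q = \mu_q$ to write $\tfrac{1}{S_n}\sum_{i=1}^n \beta_i \mu_i - \mu_q = \tfrac{1}{S_n}\sum_{i=1}^N \beta_i(\mu_i-\mu_q) + \tfrac{1}{S_n}\sum_{i=N+1}^n \beta_i(\mu_i-\mu_q)$. The head term has absolute value at most $C_N/S_n$ where $C_N \triangleq \sum_{i=1}^N \beta_i |\mu_i-\mu_q|$ is a constant independent of $n$; since $S_n \to \infty$ (a hypothesis, consistent with $\beta_i \ge \beta_1 > 0$), this tends to $0$. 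For the tail term, non‑negativity of the weights gives $\big|\sum_{i=N+1}^n \beta_i(\mu_i-\mu_q)\big| \le \eta \sum_{i=N+1}^n \beta_i \le \eta S_n$, so the tail contributes at most $\eta$ after normalization. Hence $\limsup_n \tfrac{1}{S_n}\sum_{i=1}^n \beta_i \mu_i \le \mu_q + \eta$, and as $\eta$ is arbitrary the claim follows with a vanishing $\epsilon$ (in fact the two‑sided convergence $\to \mu_q$; the one‑sided form stated is exactly what is needed for the power bound in Appendix~\ref{appendix:power}). The monotonicity of $\{\beta_i\}$ is used only to guarantee $\beta_i \ge \beta_1 > 0$ and to license a clean Abel rewriting; the substantive structure is positivity of the weights plus divergence of $S_n$.

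I expect the main obstacle to be the case of weights that grow without bound — the statement requires each $\beta_i$ finite, not uniformly bounded, so $\beta_n/S_n$ need not tend to zero and a geometrically growing weight sequence concentrates almost all of its mass on the largest indices. This is precisely where one needs genuine convergence $\mu_i \to \mu_q$ rather than mere convergence of the Cesàro means $\tfrac{1}{k}\sum_{i\le k}\mu_i$: the head/tail split above is robust to arbitrary weight growth only because on the tail the $\mu_i$ are already within $\eta$ of $\mu_q$. An alternative, more self‑contained route is to expand $\sum_i \beta_i q_i$ by summation by parts in terms of the partial sums $T_k \triangleq \sum_{i\le k}(q_i - \mu_q)$ and exploit $\beta_{k+1}-\beta_k \ge 0$ together with a bound $\Ex[T_k] = o(k)$; but making that argument quantitative for unbounded weights again forces the same two ingredients, so I would present the Toeplitz‑style split as the cleanest path and relegate the Abel form to a remark.
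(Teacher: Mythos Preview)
The paper does not supply its own proof of this lemma: it simply cites an external reference (\cite{weighted_avg}) and moves on. So there is no in-paper argument to compare against; you are providing strictly more than the paper does.

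Your argument is the standard Silverman--Toeplitz head/tail split and is correct under the interpretation you adopt, namely $\mu_i \triangleq \Ex[q_i] \to \mu_q$. Two remarks. First, under the literal reading ``each $q_i$ has mean $\mu_q$,'' linearity of expectation gives $\tfrac{1}{S_n}\sum_i \beta_i \mu_q = \mu_q$ exactly, so the lemma is trivial with $\epsilon=0$; you noticed this and sensibly upgraded to the only interpretation that makes the statement contentful. Second, in the paper's actual application (Appendix~\ref{appendix:power}) the variables are $q_i = x_i^2$ for $\xv$ uniform on the coarse Voronoi region, where one knows $\tfrac{1}{n}\sum_i \Ex[x_i^2] = \Px$ but not, in general, that the individual $\Ex[x_i^2]$ converge to $\Px$. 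So your hypothesis $\mu_i \to \mu_q$ is not obviously the one in force there; strictly speaking the application seems to need a weighted-average statement driven by $\tfrac{1}{n}\sum_i \mu_i = \mu_q$ (Ces\`aro-type input) rather than pointwise convergence of the $\mu_i$. That is a wrinkle in how the paper states and invokes the lemma, not a defect in your proof of the version you stated. Your discussion of the unbounded-weight obstacle is on point: with only a Ces\`aro hypothesis and weights that can concentrate mass on the last index, the conclusion can fail, which is exactly why the Abel/summation-by-parts route you sketch would need an additional regularity assumption. As a write-up of the Toeplitz form, your argument is clean and complete.
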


We now show that for $\xv'  = \Dm \Vm \xv$,  $\frac{1}{n} \Ex \big[ ||\xv'||^2 \big] \, \approx \, \frac{1}{n} \Ex \big[ ||\xv||^2 \big]$ at large~$n$.

\begin{align}
\frac{1}{n} \, \Ex [||\xv'||^2] \, & = \, \frac{1}{n} \, \Ex \big[ \xv^T \Vm^T \Dm^2 \Vm \xv \big] \nonumber \\
& = \, \frac{1}{n} \, \Ex_{\Wm} \Big [ \Ex_{\xv|\Wm} \big [\xv^T \Wm \xv  \big] \Big] \nonumber  \\
& = \, \Ex_{\Wm} \Big [ \Ex_{\xv|\Wm} \big [ \frac{1}{n} \, \sum_{i=1}^n w_i x_i^2  \big] \Big] \,  \label{power1}   \\
& < \, \Ex_{\Wm} \Big [ \Ex_{\xv|\Wm} \big [ \frac{1}{n} \, \sum_{i=1}^n \tilde{w}_i x_i^2 \big ] \Big] \,  \label{power2}   \\
& = \, \Ex_{\Wm} \Big [ \frac{1}{n}  \displaystyle{\sum_{\ell=1}^n} \tilde{w}_{\ell} ~ \Ex_{\xv|\Wm} \big [ \frac{1}{\sum_{\ell=1}^n \tilde{w}_{\ell}}  \,  \sum_{i=1}^n \tilde{w}_i x_i^2 \big ] \Big]  \nonumber  \\
& \leq \, ( \Px + \epsilon_1 ) \, \Ex_{\Wm} \Big [  \frac{1}{n} \sum_{i=1}^n \tilde{w}_i  \Big] \,  \label{power3}   \\
& = \, ( \Px + \epsilon_1 ) \, (  1 + \epsilon_2 +  \epsilon_3 ) \, \triangleq  \,  \Px  + \epsilon' \, . \label{power4}   
\end{align}
Given the structure of the permutation matrix~$\Vm$ described in Appendix~\ref{appendix:V}, $\Wm \triangleq \Vm^T \Dm^2 \Vm$ is a diagonal matrix with non-decreasing positive entries $w_1, \ldots, w_n$, and hence~\eqref{power1} follows. Note that~$\xv$ and~$\Wm$ are independent. Define $\tilde{w}_i \triangleq w_i + \delta_i$, where $\delta_i \triangleq  \frac{i}{n} e^{-n} \upsilon$, and $\upsilon \triangleq \displaystyle{\min_{ \{j,k\} \, , \, w_j \neq w_k }}  \Big | w_j - w_k \Big|$. Hence, from~\eqref{power2}, $\delta_i$ assures all weights $\tilde{w}_i$ are positive and monotonically increasing. \eqref{power3} follows from Lemma~\ref{lemma:weighted_avg}. \eqref{power4} follows since $\frac{1}{n} \sum_{i=1}^n  w_i < 1 + \epsilon_2$, where $w_i$ represents the {\em normalized} waterfilling power allocations, and  $\epsilon_3 \triangleq \, \frac{1}{n} \sum_{i=1}^n  \delta_i$ vanishes with~$n$. Thereby $\epsilon' \to 0$ as~$n \to \infty$.


\section{Designing the Permutation Matrix~$\Vm$}
\label{appendix:V}

As mentioned in Section~\ref{sec:heuristic}, the role of~$\Vm$ is re-ordering the channel coefficients in the diagonal matrix~$\Hm$ in ascending order of the magnitudes. Recall the coefficients take on the values~$\hh_1,\ldots,\hh_{|\ss|}$, where~$\hh_k$ appears with frequency~$\mu_k$. 
Now, define a counter~$\nu_k^{(i)} \leq n \mu_k$, that counts the number of occurrences of~$\hh_k$ after channel use~$i$. Typically, when $h_i=\hh_k $, then $\Vm(i,:)$ is $\impulse_m^T$, where $m=\nu_k^{(i)}+ \sum_{l=1}^{k-1} n \mu_l$. It easily follows that $\Vm^T \Hm \Vm = \Hm_{\order}$ for any~$\Hm$ that belongs to the random location channel model.


\section{Proof of Lemma~\ref{lemma:z_dist}}
\label{appendix:z_dist}

We follow in the footsteps of \cite{Erez_Zamir,DMT_lattices}. Consider a noise vector~$\zv^* \in \real^n$ that is closely related to the post-equalizer noise $\zv$ as follows%
\begin{equation}
\zv^* =  \, \Am \gv + \Bm \big( \sqrt{\Px} \, \wv +  \sqrt{\sigma_{\ball}^2 - \Px} \, \wv^* \big) \, ,
\label{eq:zi_2}
\end{equation}
where~$\gv, \wv^*$ are i.i.d. Gaussian with zero mean and covariances~$\sigma_{\ball}^2 \Ix_n$, $\Ix_n$, respectively and $\sigma_{\ball}^2$ is the second moment of the smallest sphere covering~$\voronoi$. $\Am, \Bm$ are diagonal matrices whose diagonal elements are as follows
\begin{equation}
A_{ii} \triangleq \frac{-1}{\Px^*_{\order(i)}  h_{\order(i)}^2 +1} ~~ , ~~ 
B_{ii} \triangleq \frac{\sqrt{ \Px^*_{\order(i)} } h_{\order(i)}}{\Px^*_{\order(i)} h_{\order(i)}^2 +1} \, .
\label{AABB}
\end{equation} 
 It is then easy to show that the auto-correlation matrix~$\boldsymbol{\cov}^*$ of $\zv^*$ is diagonal, whose elements are given by
\begin{equation}
\boldsymbol{\cov}^*_{ii} = \frac{\sigma_{\ball}^2}{\Px^*_{\order(i)}  h_{\order(i)}^2 +1}  \, .
\label{cov_noise2}
\end{equation}
Note that $\sigma_{\ball}^2 = (1+\epsilon') \Px$, where~$\epsilon'$ can be made arbitrarily small by increasing~$n$~\cite[Lemma 6]{Erez_Zamir}. 
 The probability $\prob \big( \zv^* \notin \genvoronoi \big)$ is then equivalent to $\prob \big( ||\zv^{(w)}||^2 > (1+\epsilon'') n \big)$, where $\zv^{(w)} \triangleq \boldsymbol{\cov^{* \frac{-1}{2}}} \zv^*$. Hence, $||\zv^{(w)}||^2$ is a  chi-squared random variable with~$n$ degrees-of-freedom. Using the Chernoff bound~\cite{kobayashi},
\begin{align}
\prob \big( ||\zv^{(w)}||^2 > (1+\epsilon'') n \big) \leq 
& \min_{t \geq 0}  \big \{  e^{- n \big( (1+\epsilon'')t + \log_e (1-t) \big)} \big \}  \nonumber \\
= & (1+\epsilon'')^{\frac{ n}{2}} e^{-\frac{ n \epsilon''}{2}}    \nonumber \\ 
= & e^{-\frac{ n}{2} \big( \epsilon'' - \log_e (1+ \epsilon'')  \big)}.
\label{chernoff}
\end{align}

Now, we show that the probability density of $\zv$, $f_{\zv} (\vv)$ is upper-bounded (up to a constant) by $f_{\zv^*}(\vv)$. It was shown in~\cite[Lemma 11]{Erez_Zamir}, that 
\begin{equation}
f_{\xv} (\vv) \leq e^{c_n} f_{\gv} (\vv),
\label{pdf_bound}
\end{equation}
where $c_n/n \to 0$ as $n \to \infty$. Hence,  $f_{\Am \xv} (\vv) \leq e^{\delta' n} f_{\Am\gv} (\vv)$ and $f_{\zv} (\vv) \leq e^{\delta' n} f_{\zv^*} (\vv)$ follow as well, where the former inequality is obtained using transformation of random variables whereas the latter inequality is obtained via convolution of both terms in~\eqref{eq:zi_2}. Hence,
\begin{align}
 \prob (\zv \notin \genvoronoi) & \, =  \,  \int_{v \notin \genvoronoi} {f_{\zv}(\vv) dv} \nonumber \\ 
& \, \leq \,  e^{c_n} \,  \int_{v \notin \genvoronoi} {f_{\zv}(\vv) dv}  \nonumber \\
& \, = \, e^{c_n} \, \prob \big( ||\zv^{(w)}||^2 > (1+\epsilon'') n \big) \nonumber \\ 
& \, \leq \, e^{-\frac{n}{2} \big( \epsilon'' - \log_e (1+ \epsilon'') - \frac{2 c_n}{n}  \big)},
\label{error_bound}
\end{align}
where $\log_e$ is the natural logarithm. Since $\epsilon'' > \log_e (1+ \epsilon'')$ for all $\epsilon''>0$ and $c_n/n \to 0$ as $n \to \infty$, the exponent in~\eqref{error_bound} remains negative and it can be shown that there exists $n_{\gamma}$ such that for all $n> n_{\gamma}$, $ \prob (\zv \notin \genvoronoi)< \gamma$. 
The final step is to show that the elements of $\sqrt{\sigma_{\ball}^2 - \Px} \, \Bm \wv^*$ vanish with~$n$. Let $\bar{\wv} \triangleq \Bm \wv^*$, and $\gamma^* \triangleq \sqrt{\sigma_{\ball}^2 - \Px}$. Since~$|B_{ii}|<1$ and $\wv^*$ is i.i.d. with unit variance, the variance of each of the elements~$\bar{w}_i$ is no more than~$1$. Using Chebyshev's inequality~\cite{kobayashi}, 
\begin{align}
\prob \big(\gamma^*\sqrt{\sigma_{\ball}^2 - \Px} \, \bar{w}_i \geq \gamma^* \, \kappa \big) & \leq \frac{1}{\kappa^2}, ~~~~~~ for~all~ \kappa > 0,
\nonumber \\
\prob \big(\sqrt{\sigma_{\ball}^2 - \Px} \, \bar{w}_i \geq \sqrt{\gamma^*} \,  \big) & \leq \gamma^* \,, 
\label{chebyshev}
\end{align}
and~\eqref{chebyshev} follows when $\kappa =\frac{1}{\sqrt{\gamma^*}}$. Since $\displaystyle{\lim_{n \to \infty}} \sigma_{\ball}^2 = \Px$ for a covering-good lattice, the elements of $\sqrt{\sigma_{\ball}^2 - \Px} \, \Bm \wv^*$ vanish with~$n$. This concludes the proof of Lemma~\ref{lemma:z_dist}.


\section{Proof of Theorem~\ref{theorem:rate_ptp}}
\label{appendix:theorem_ptp}

We bound the total number of channel occurrences that deviate from $n \prob_k$ 
in~\eqref{robust_1} as follows\,%
\footnote{For simplicity, we assume $n \prob_k$ and $n \delta$ are positive integers.}
\begin{equation}
\sum_{k=1}^{|\ss|} |n_k - n \prob_k| \leq  \sum_{k=1}^{|\ss|} \delta n \prob_k = \delta n \triangleq n_{out}.
\label{robust_3}
\end{equation}
From~\eqref{robust_2},~\eqref{robust_3} on varying~$\delta$, a tradeoff occurs between $n_{out}$ and the total number of $\delta$-typical sequences. However, for the choice $\delta \triangleq \delta' n^{\frac{-1}{2}(1-\gamma)}$, where $\delta'>0$ and $0<\gamma<1$, $n_{out}=  \delta' n^{\frac{1}{2}(1+\gamma)} $ is a vanishing fraction of~$n$, and the probability of non-typical sequences would be upper-bounded by $2 |\ss| e^{-\frac{\mu n^{\gamma}}{3}}$. Hence, negligible $n_{out}$ can be guaranteed for almost all sequences satisfying a distribution at large~$n$.

Now we are ready to present the capacity achieving scheme. The encoding, the choice of~$\Dm$ as well as~$\Um$ are identical to Section~\ref{sec:heuristic}, i.e., $D_{ii}^2$ are the normalized optimal power allocations for~$h_i$ and~$U_{ii}$ are the MMSE coefficients. However, designing the permutation matrix~$\Vm$ is now more challenging, since the number of occurrences of the different channel values does not exactly fit the statistical distribution of the channel. We adopt a best-effort approach to choose~$\Vm$, whose design is provided in detail in Table~\ref{table_1}. Briefly, we make a rough assumption that coefficient~$\alpha_k$ occurs exactly~$n \prob_k$ times. Designing the first $ n - n_{out} $ rows of~$\Vm$ is identical to Appendix~\ref{appendix:V}. For the remaining $n_{out}$ rows, one or more slots dedicated for $\alpha_k$ may be exhausted. Hence, we utilize the unoccupied slots dedicated for $\alpha_j$ where $j<k$, whose channel magnitudes are smaller. If all are occupied, we utilize the last $n_{out}$ time slots available. This implies that the $n_{out}$ channel coefficients with the largest magnitudes have no dedicated slots.%
\footnote{The structure of~$\Vm$ preserves the power constraint: the precoder design is identical to Appendix~\ref{appendix:V} up to the first $n-n_{out}$~entries. The last $n_{out}$ entries have negligible impact on the power since $\displaystyle{\lim_{n \to \infty}} \frac{n_{out}}{n} = 0$.}
\begin{table}
\centering
\begin{tabular}{|p{\Tablewidth}| @{}}
\hline
\qquad \qquad \qquad \quad \textbf{Design of the permutation matrix~$\Vm$}  \\
\hline
\footnotesize{
\textbf{Set} $\nu_{out}=0$, $\nu_k^{(0)}=0$ for $k=1,\ldots,|\ss|$. 
}
\\
\footnotesize{
\textbf{for $i=1:n$} 
}
\\
\quad 
\footnotesize{
\textbf{Set} $flag=0$.
}
\\
\quad 
\footnotesize{
\textbf{if $h_i== \alpha_k$} 
}
\\
\quad \quad
\footnotesize{ 
\textbf{Set} $\nu_k^{(i)}=\nu_k^{(i-1)}+1$ and $\nu_l^{(i)}=\nu_l^{(i-1)}$ for $l \neq k$.
}
\\
\quad \quad
\footnotesize{
\textbf{if $\nu_k^{(i)} \leq n p_k$} 
}
\\
\quad \quad \quad
\footnotesize{
\textbf{Set} $\Vm(i,:)$ to $\impulse_m^T$, where $m=\nu_k^{(i)}+ n \sum_{l=1}^{k-1} \prob_l$.
}
\\
\quad \quad \quad
\footnotesize{
\textbf{Set} $flag=1$.
}
\\
\quad \quad
\footnotesize{
\textbf{else} 
}
\\
\quad \quad \quad
\footnotesize{
\textbf{for $j=k-1:-1:1$} 
}
\\
\quad \quad \quad \quad
\footnotesize{
\textbf{if $\nu_j^{(i)} \leq n \prob_j$}
}
\\
\quad \quad \quad \quad \quad
\footnotesize{ 
\textbf{Set} $\nu_j^{(i)}=\nu_j^{(i-1)}+1$ and $\nu_l^{(i)}=\nu_l^{(i-1)}$ for $l \neq j$.
}
\\
\quad \quad \quad \quad \quad
\footnotesize{
\textbf{Set} $\Vm(i,:)$ to $\impulse_m^T$, where $m=\nu_j^{(i)}+ n \sum_{l=1}^{j-1} \prob_l$.
}
\\
\quad \quad \quad \quad \quad
\footnotesize{
\textbf{Set} $flag=1$.
}
\\
\quad \quad \quad \quad \quad
\footnotesize{
break;
}
\\
\quad \quad \quad \quad
\footnotesize{
\textbf{end} 
}
\\
\quad \quad \quad
\footnotesize{
\textbf{end} 
}
\\
\quad \quad \quad
\footnotesize{
\textbf{if \textit{flag}==0} 
}
\\
\quad \quad \quad \quad
\footnotesize{
\textbf{Set} $\nu_{out}=\nu_{out}+1$.
}
\\
\quad \quad \quad \quad
\footnotesize{
\textbf{Set} $\Vm(i,:)$ to $\impulse_m^T$, where $m=\nu_{out}+ \lfloor n- n_{out} \rfloor$.
}
\\
\quad \quad \quad
\footnotesize{
\textbf{end} 
}
\\
\quad \quad
\footnotesize{
\textbf{end} 
}
\\
\quad
\footnotesize{
\textbf{end} 
}
\\
\footnotesize{
\textbf{end} 
}
\\
\hline
\end{tabular}
\vspace{2mm}
 \caption{Steps of designing~$\Vm$ for a general fading channel.}
 \label{table_1}
\end{table}
Following the permutation operation, we use an ellipsoidal decision region $\tilde{\genvoronoi}_1$ as follows
\begin{equation}
\tilde{\genvoronoi}_1 \triangleq \{ \sv \in \real^n~:~ \sv^T \boldsymbol{\tilde{\cov}}^{-1} \sv \leq (1+\epsilon) n \},
\label{voronoi_gen}
\end{equation}
where $\boldsymbol{\tilde{\cov}}$ is a diagonal matrix given by
\begin{equation}
\tilde{\cov}_{ii} =
\begin{cases}
\frac{\Px}{\Px h_{\order(i)}^2 + 1} ~  & \text{for}~i \in \{1,\ldots,  n- n_{out}  \}  \\
\Px & \text{for}~i \in \{ n- n_{out} +1, \ldots,n  \}.
\end{cases}
\label{z_cov_gen}
\end{equation}
Owing to~$\Vm$, $\genvoronoi^{(p)} \subseteq \tilde{\genvoronoi}$, where $\genvoronoi^{(p)}$ is an ellipsoid parametrized by a diagonal auto-correlation matrix~$\boldsymbol{\cov^{(p)}}$ whose elements are in descending order (recall the channel coefficients are in ascending order), and hence achieves capacity. $\genvoronoi^{(p)} \subseteq \tilde{\genvoronoi}$ follows since $\cov_{ii}^{(p)} \leq \tilde{\cov}_{ii}$ for any~$i$, as guaranteed by the structure of~$\Vm$. Then the achievable rate is
\begin{equation}
R < \frac{1}{2n} \sum_{i=1}^{n}  \log{\big( 1 +  h_i^2 \Px^*(h_i) \big)},
\label{RateCSIRn}
\end{equation}
which converges to~$\frac{1}{2} \Ex \Big[ \log{\big( 1 +  h^2 \Px^*(h) \big)} \Big]$, by the weak law of large numbers.
The final step is to show that the suboptimal decision region~$\tilde{\genvoronoi}_1$ has negligible impact on the achievable rate. From~\eqref{lattice_rate} as well as the error analysis in Section~\ref{sec:heuristic}, the gap~$\gap \triangleq C - R$ is bounded by
\begin{align}
\gap = & \, \frac{1}{n} \Big( \log \Big( \frac{\vol(\tilde{\genvoronoi}_1)}{\vol(\genvoronoi^{(p)})} \Big) + o(n) \Big) \nonumber \\
= &  \, \frac{n_{out}}{n} \, \frac{1}{n_{out}} \sum_{i=n-n_{out}+1}^{n} \log (1 + \Px^*_{\order(i)} h_{\order(i)}^2 ) + \frac{o(n)}{n}  \nonumber \\
< &  \, \frac{n_{out}}{n} \, \log (1 + \Px^*_{\order(n)} h_{\order(n)}^2 ) + \frac{o(n)}{n} \,,
\label{gap}
\end{align}
where $o(n)$ satisfies $\displaystyle{\lim_{n \to \infty}} \frac{o(n)}{n}=0$. Hence, $\gap$ vanishes since $\frac{n_{out}}{n} \to 0$ as $n \to \infty$.


\bibliographystyle{IEEEtran}
\bibliography{IEEEabrv,References}

\begin{thebibliography}{10}
\providecommand{\url}[1]{#1}
\csname url@samestyle\endcsname
\providecommand{\newblock}{\relax}
\providecommand{\bibinfo}[2]{#2}
\providecommand{\BIBentrySTDinterwordspacing}{\spaceskip=0pt\relax}
\providecommand{\BIBentryALTinterwordstretchfactor}{4}
\providecommand{\BIBentryALTinterwordspacing}{\spaceskip=\fontdimen2\font plus
\BIBentryALTinterwordstretchfactor\fontdimen3\font minus
  \fontdimen4\font\relax}
\providecommand{\BIBforeignlanguage}[2]{{%
\expandafter\ifx\csname l@#1\endcsname\relax
\typeout{** WARNING: IEEEtran.bst: No hyphenation pattern has been}%
\typeout{** loaded for the language `#1'. Using the pattern for}%
\typeout{** the default language instead.}%
\else
\language=\csname l@#1\endcsname
\fi
#2}}
\providecommand{\BIBdecl}{\relax}
\BIBdecl

\bibitem{TCM}
G.~Ungerboeck, ``Channel coding with multilevel/phase signals,'' \emph{{IEEE}
  Trans. Inf. Theory}, vol.~28, no.~1, pp. 55--67, Jan. 1982.

\bibitem{MLC}
U.~Wachsmann, R.~F.~H. Fischer, and J.~B. Huber, ``Multilevel codes:
  theoretical concepts and practical design rules,'' \emph{{IEEE} Trans. Inf.
  Theory}, vol.~45, no.~5, pp. 1361--1391, Jul. 1999.

\bibitem{coset}
G.~D. Forney, ``Coset codes. {I}. {I}ntroduction and geometrical
  classification,'' \emph{{IEEE} Trans. Inf. Theory}, vol.~34, no.~5, pp.
  1123--1151, Sep. 1988.

\bibitem{LDPC}
S.~T. Brink, G.~Kramer, and A.~Ashikhmin, ``Design of low-density parity-check
  codes for modulation and detection,'' \emph{{IEEE} Trans. Commun.}, vol.~52,
  no.~4, pp. 670--678, Apr. 2004.

\bibitem{Turbo}
P.~Robertson and T.~Worz, ``Bandwidth-efficient turbo trellis-coded modulation
  using punctured component codes,'' \emph{{IEEE} J. Sel. Areas Commun.},
  vol.~16, no.~2, pp. 206--218, Feb. 1998.

\bibitem{BICM}
G.~Caire, G.~Taricco, and E.~Biglieri, ``Bit-interleaved coded modulation,''
  \emph{{IEEE} Trans. Inf. Theory}, vol.~44, no.~3, pp. 927--946, May 1998.

\bibitem{polar}
E.~Abbe and A.~Barron, ``Polar coding schemes for the {AWGN} channel,'' in
  \emph{Information Theory Proceedings (ISIT), 2011 IEEE International
  Symposium on}, Jul. 2011, pp. 194--198.

\bibitem{Forney_Gaussian}
G.~D. Forney and G.~Ungerboeck, ``Modulation and coding for linear {G}aussian
  channels,'' \emph{{IEEE} Trans. Inf. Theory}, vol.~44, no.~6, pp. 2384--2415,
  Oct. 1998.

\bibitem{stark_paper}
R.~McEliece and W.~Stark, ``Channels with block interference,'' \emph{{IEEE}
  Trans. Inf. Theory}, vol.~30, no.~1, pp. 44--53, Jan. 1984.

\bibitem{ptp_ergodic}
A.~Goldsmith and P.~Varaiya, ``Capacity of fading channels with channel side
  information,'' \emph{{IEEE} Trans. Inf. Theory}, vol.~43, no.~6, pp.
  1986--1992, Nov. 1997.

\bibitem{telatar}
I.~E. Telatar, ``Capacity of multi-antenna {G}aussian channels,'' \emph{Europ.
  Trans. Telecom.}, vol.~10, pp. 585--595, Nov. 1999.

\bibitem{foschini}
G.~Foschini and M.~Gans, ``On limits of wireless communications in a fading
  environment when using multiple antennas,'' \emph{Wireless Personal Commun.},
  vol.~6, pp. 311--335, 1998.

\bibitem{fading_Shamai}
E.~Biglieri, J.~Proakis, and S.~Shamai, ``Fading channels:
  information-theoretic and communications aspects,'' \emph{{IEEE} Trans. Inf.
  Theory}, vol.~44, no.~6, pp. 2619--2692, Oct. 1998.

\bibitem{capacity_MIMO}
A.~Goldsmith, S.~Jafar, N.~Jindal, and S.~Vishwanath, ``Capacity limits of
  {MIMO} channels,'' \emph{{IEEE} J. Sel. Areas Commun.}, vol.~21, no.~5, pp.
  684--702, Jun. 2003.

\bibitem{debuda_capacity}
R.~de~Buda, ``Some optimal codes have structure,'' \emph{{IEEE} J. Sel. Areas
  Commun.}, vol.~7, no.~6, pp. 893--899, Aug. 1989.

\bibitem{Linder}
T.~Linder, C.~Schlegal, and K.~Zeger, ``Corrected proof of de {B}uda's
  theorem,'' \emph{{IEEE} Trans. Inf. Theory}, vol.~39, no.~5, pp. 1735--1737,
  Sep. 1993.

\bibitem{Loeliger}
H.~A. Loeliger, ``Averaging bounds for lattices and linear codes,''
  \emph{{IEEE} Trans. Inf. Theory}, vol.~43, no.~6, pp. 1767--1773, Nov. 1997.

\bibitem{Urbanke_lattices}
R.~Urbanke and B.~Rimoldi, ``Lattice codes can achieve capacity on the {AWGN}
  channel,'' \emph{{IEEE} Trans. Inf. Theory}, vol.~44, no.~1, pp. 273--278,
  Jan. 1998.

\bibitem{Erez_Zamir}
U.~Erez and R.~Zamir, ``Achieving 1/2log(1+{SNR}) on the {AWGN} channel with
  lattice encoding and decoding,'' \emph{{IEEE} Trans. Inf. Theory}, vol.~50,
  no.~10, pp. 2293--2314, Oct. 2004.

\bibitem{Lattices_good}
U.~Erez, S.~Litsyn, and R.~Zamir, ``Lattices which are good for (almost)
  everything,'' \emph{{IEEE} Trans. Inf. Theory}, vol.~51, no.~10, pp.
  3401--3416, Oct. 2005.

\bibitem{DMT_lattices}
H.~El-Gamal, G.~Caire, and M.~O. Damen, ``Lattice coding and decoding achieve
  the optimal diversity-multiplexing tradeoff of {MIMO} channels,''
  \emph{{IEEE} Trans. Inf. Theory}, vol.~50, no.~6, pp. 968--985, Jun. 2004.

\bibitem{IF_RX}
J.~Zhan, B.~Nazer, U.~Erez, and M.~Gastpar, ``Integer-forcing linear
  receivers,'' \emph{{IEEE} Trans. Inf. Theory}, vol.~60, no.~12, pp.
  7661--7685, Dec. 2014.

\bibitem{CF_Nazer}
B.~Nazer and M.~Gastpar, ``Compute-and-forward: Harnessing interference through
  structured codes,'' \emph{{IEEE} Trans. Inf. Theory}, vol.~57, no.~10, pp.
  6463--6486, Oct. 2011.

\bibitem{IF_Erez}
O.~Ordentlich and U.~Erez, ``Precoded integer-forcing universally achieves the
  {MIMO} capacity to within a constant gap,'' \emph{{IEEE} Trans. Inf. Theory},
  vol.~61, no.~1, pp. 323--340, Jan. 2015.

\bibitem{dispersion}
\BIBentryALTinterwordspacing
S.~Vituri, ``Dispersion analysis of infinite constellations in ergodic fading
  channels,'' \emph{CoRR}, vol. abs/1309.4638, 2015. [Online]. Available:
  \url{http://arxiv.org/abs/1309.4638v2}
\BIBentrySTDinterwordspacing

\bibitem{Luzzi_j}
L.~Luzzi and R.~Vehkalahti, ``Almost universal codes achieving ergodic {MIMO}
  capacity within a constant gap,'' \emph{{IEEE} Trans. Inf. Theory}, vol.~63,
  no.~5, pp. 3224--3241, May 2017.

\bibitem{Journal1}
A.~Hindy and A.~Nosratinia, ``Lattice coding and decoding for multiple-antenna
  ergodic fading channels,'' \emph{{IEEE} Trans. Commun.}, vol.~65, no.~5, pp.
  1873--1885, May 2017.

\bibitem{polar_lattices}
L.~Liu and C.~Ling, ``Polar codes and polar lattices for independent fading
  channels,'' \emph{{IEEE} Trans. Commun.}, vol.~64, no.~12, pp. 4923--4935,
  Sep. 2016.

\bibitem{algebraic_lattices}
A.~Campello, C.~Ling, and J.~C. Belfiore, ``Algebraic lattices achieving the
  capacity of the ergodic fading channel,'' in \emph{2016 IEEE Information
  Theory Workshop (ITW)}, Sep. 2016, pp. 459--463.

\bibitem{robust_typicality}
A.~Orlitsky and J.~R. Roche, ``Coding for computing,'' \emph{{IEEE} Trans. Inf.
  Theory}, vol.~47, no.~3, pp. 903--917, Mar. 2001.

\bibitem{Cover}
T.~Cover and J.~Thomas, \emph{Elements of Information Theory}.\hskip 1em plus
  0.5em minus 0.4em\relax Wiley-Interscience, 2006.

\bibitem{paper1}
A.~Hindy and A.~Nosratinia, ``Approaching the ergodic capacity with lattice
  codes,'' in \emph{Global Communications Conference (GLOBECOM), 2014 IEEE},
  Dec. 2014, pp. 1492--1496.

\bibitem{Tse}
D.~Tse and P.~Viswanath, \emph{Fundamentals of Wireless Communication}.\hskip
  1em plus 0.5em minus 0.4em\relax Cambridge University Press, NY, USA, 2005.

\bibitem{paper2}
A.~Hindy and A.~Nosratinia, ``Achieving the ergodic capacity with lattice
  codes,'' in \emph{Information Theory (ISIT), 2015 IEEE International
  Symposium on}, Jun. 2015, pp. 441--445.

\bibitem{weighted_avg}
N.~Etemadi, ``Convergence of weighted averages of random variables revisited,''
  in \emph{Proc. of the American Mathematical Society}, vol. 134, no.~9, Sep.
  2006, pp. 2739--2744.

\bibitem{kobayashi}
H.~Kobayashi, B.~Mark, and W.~Turin, \emph{Probability, Random Processes, and
  Statistical Analysis: Applications to Communications, Signal Processing,
  Queueing Theory and Mathematical Finance}.\hskip 1em plus 0.5em minus
  0.4em\relax Cambridge University Press, 2011.

\end{thebibliography}

\end{document}